\newtheorem{theorem}{Theorem}[section]
\newtheorem{lemma}[theorem]{Lemma}
\newtheorem{corollary}[theorem]{Corollary}
\newtheorem{definition}[theorem]{Definition}
\newcommand{\nn}{ \nonumber \\}
\newlength\aftertitskip     \newlength\beforetitskip
\newlength\interauthorskip  \newlength\aftermaketitskip
\begin{document}


	{
		\title{ A new method for quantifying network cyclic structure to improve community detection
		}
		\author{Behnaz Moradi-Jamei\footnote{\texttt{bm7mp@virginia.edu}. School of Data Science, University of Virginia.  P.O. Box 400249, 
Charlottesville, VA, 22904.}, Heman Shakeri\footnote{\texttt{hs9hd@virginia.edu}. School of Data Science, University of Virginia.  P.O. Box 400249, 
Charlottesville, VA, 22904.}, Pietro Poggi-Corradini\footnote{\texttt{pietro@math.ksu.edu}.  Department of Mathematics, Kansas State University. 138 Cardwell Hall,  Manhattan, KS., 66506.  This material is based upon work supported by the National Science Foundation under Grant No. DMS 1515810.} and\\ Michael J. Higgins\footnote{\texttt{mikehiggins@ksu.edu}. Department of Statistics, Kansas State University. 101 Dickens Hall,  Manhattan, KS., 66506.}$^{~}$\footnote{Corresponding author.}
			}
		\maketitle

	\begin{abstract} 
		A distinguishing property of communities in networks is that cycles are more prevalent within communities than across communities.
		Thus, the detection of these communities may be aided through the incorporation of measures of the local "richness" of the cyclic structure. 
		In this paper, we introduce renewal non-backtracking random walks (RNBRW) as a way of quantifying this structure. 
		RNBRW gives a weight to each edge equal to the probability that a non-backtracking random walk completes a cycle with that edge. 
		Hence, edges with larger weights may be thought of as more important to the formation of cycles.
		Of note, since separate random walks can be performed in parallel, RNBRW weights can be estimated very quickly, even for large graphs.
		We give simulation results showing that pre-weighting edges through RNBRW may substantially improve the performance of common community detection algorithms.
	    Our results suggest that RNBRW is especially efficient for the challenging case of detecting communities in sparse graphs. 

	\end{abstract}
	
	\noindent%
	{\it Keywords:}  Cyclic topologies;  modularity function;  retraced non-backtracking random walk
%
%

	\maketitle

\section{Introduction}

In many sciences---for example Sociology~\citep{scott2017social}, Biology~\citep{chung2015bridging}, and Computer Science~\citep{hendrickson2000graph}---units under study often belong to communities, and thus, may behave similarly.
Hence, understanding the community structure of units is critical in these sciences, and much work has been devoted to the development of methods to detect these communities~\citep{fortunato2010community}.  

Community detection methods often use the framework of mathematical networks: units are nodes (vertices) in the graph, and edges are drawn between two nodes if the corresponding units interact with each other.
Under this framework, the problem of community detection becomes a graph partitioning problem where units within each block of the partition are ``optimally'' connected under some objective.
Commonly, communities are identified through edge prevalence; edges are more frequently observed between nodes in the same community than between nodes in different communities.

Optimal detection of communities is often $\mathcal{NP}$-hard because the number of possible partitions grow exponentially  as the number of units $n$ grows.
Many (mostly heuristic) methods have been developed for the detection of communities under large $n$ settings.
Two such methods include the Louvain algorithm~\citep{blondel2008fast} and the CNM~\citep*{clauset2004finding} algorithm.
Moreover, the ability of these algorithms to accurately identify communities may be improved through incorporating additional graph structure, which are typically integrated into these algorithms through the use of edge weights~\citep{de2013enhancing,sun2014weighting, khadivi2011network}.


Of note, previous work has shown that measures quantifying the cyclic structure of the network may be useful in detecting communities~\citep{newman2003structure, radicchi2004defining, kim2005cyclic, vragovic2006network, zhang2008clustering, shakeri2017network}.
The intuition is that, if edges are more prevalent within a community than across communities, cycles (especially small cycles) may be more prevalent within communities as well. 
Hence, edge weights that incorporate information about the density of cycles within a graph may be useful in detecting communities within that graph.
However, some of these methods may be computationally prohibitive to implement as network sizes grow very large.


On the other hand, random walks have been proposed as a computationally efficient tool for uncovering the structure of networks. 
When the density of edges is higher within communities than across communities, random walks will spend a majority of their time traveling within communities~\citep{hughes1995random, lai2010enhanced}. 
Hence, network structure may be uncovered by following the paths of these random walks.
An important variant is the non-backtracking random walk (NBRW), in which the random walk is prohibited from returning back to a node in exactly two steps~\citep{alon2007non, fitzner2013non}.
While work on NBRW currently focuses on its fast convergence rate, we exploit another useful property of NBRW---its ability to identify cyclic structure.  

In this paper, we consider the process in which a NBRW is performed until it forms a cycle, at which point the walk terminates and restarts---we call this process the renewal non-backtracking random walk (RNBRW).
We are particularly interested in the \textit{retracing probability} of an edge---the probability that the edge completes the cycle in one iteration of RNBRW. 
Intuitively, edges with higher retracing probabilities are more critical to the formation of a cycle.
Hence, the cyclic structure of a network may be incorporated into community detection algorithms through weighting edges by their retracing probability.
Although analytically obtaining exact values for retracing probabilities is difficult, repeated iterations of RNBRW can yield accurate estimates of these probabilities.  
Additionally, these iterations can be performed in parallel, allowing these probabilities to be estimated extremely fast, even for networks containing millions of nodes.  

We show through simulation that weighting edges by their estimated retracing probabilities through RNBRW improves substantially the ability for Louvain and CNM methods to detect communities.
The improvement is especially noticeable in the case of sparse graphs.
Additionally, we show that using Louvain and CNM with these weights may offer comparable performance to that of state-of-the-art community detection methods with considerably less computation.


\section{Notation and preliminaries}\label{sec:prelim}
Let $G=(V,E)$ denote a graph with node set $V$ and edge set $E$. 
We do not currently make any restrictions on whether $G$ is directed or undirected. 
Let $n = |V|$ and $m = |E|$. 
If $G$ is undirected, we will denote the edge between vertices $i$ and $j$ by $ij$, and if $G$ is directed, we denote the edge traveling from $i$ to $j$ by $\vec{ij}$.
For convenience, we may refer to edges in an undirected graph $G$ using the latter notation; 
in this case, $\vec{ij}$ and $\vec{ji}$ refer to the same edge $ij$.
Edges $\vec{ij} \in E$ may be weighted; let $w_{\vec{ij}}$ denote the value of this weight.
For ease of notation, we assume an unweighted graph has default edge weights $w_{\vec{ij}} = 1$.
An \textit{adjacency matrix} $A$ for a graph $G$ is an $n \times n$ matrix 
where element $A_{ij} =1$ if edge $\vec{ij} \in E$ and $A_{ij} = 0$ otherwise. 

The (possibly weighted) \textit{degree} of node $i$, denoted $d_i$, is the sum of the edge weights that begin from node $i$. 
\begin{equation}
    \label{eq:definedi}
   d_i \equiv \sum_{j = 1}^n w_{\vec{ij}}
\end{equation}
We assume edge weights are scaled so that
\begin{equation}
    \sum_{i=1}^n d_i = 2m.
\end{equation}
Note that, in the unweighted case, $d_i$ is simply the number of edges beginning from node $i$. 

We deviate from conventional literature by defining a walk by the edges it traverses; a \textit{walk} $(\overrightarrow{v_0v_1}, \overrightarrow{v_1v_2}, \ldots, \overrightarrow{v_{k-1}v_k})$ of length $k$ is a vector of edges  $\overrightarrow{v_{\ell-1}v_{\ell}} \in E$, $\ell = 1,2,\ldots, k$ connecting (possibly non-distinct) nodes $v_0,v_1, v_2, \ldots,v_k \in V$.
A \textit{random walk} $(e_1, e_2, \ldots, e_k)$ is a walk such that:
\begin{align}
    &P\left(e_{\ell+1} = \overrightarrow{v_{\ell}v_{\ell+1}}  | e_{\ell} = \overrightarrow{v_{\ell-1}v_{\ell}}, \ldots, e_1 = \overrightarrow{v_{0}v_{1}}\right)\nonumber \\
     =\;&  
     P\left(e_{\ell+1} = \overrightarrow{v_{\ell}v_{\ell+1}} | e_{\ell} = \overrightarrow{v_{\ell-1}v_{\ell}}\right) = \frac{w_{\overrightarrow{v_{\ell}v_{\ell+1}}}}{d_\ell}
\end{align} 
A walk $c$ is a \textit{simple cycle} 
if $v_{0} = v_{k}$ and $v_0,v_1, v_2, \ldots,v_{k-1}$ are distinct.

The nodes $V$ are partitioned into $q$ communities, numbered 1 through $q$.
Let $g_v \in \{1, \ldots, q\}$ denote the community to which $v$ belongs and $\mathbf g = (g_1, g_2, \ldots, g_n)$.
In problems of community detection, we wish to uncover the true label $g_v$ for each node $v \in V$.

\subsection{Community detection through maximizing modularity}
One popular approach for the problem of community detection is to choose communities that corrrespond to the optimal solution of an integer programming problem.  
A common objective function for community detection is 
the graph modularity~\citep{newman2003structure}: 
\begin{equation}\label{eq_Modularity}
    M(\mathbf g; A) = \frac{1}{2m}\sum_{i,j}\left(A_{ij}-\frac{d_i d_j}{2m} \right) \delta_{g_i g_j}.
\end{equation}
Here, $\delta_{g_i g_j}$ is the Dirac delta function equal to 1 if
$g_i = g_j$---that is, if $i$ and $j$ belong to the same community.
Intuitively, for a given community label $\mathbf g$, the modularity function measures how much within-group and across-group edge formation deviates from independence.
Better choices of communities $\mathbf g$ correspond to larger values of $M(\mathbf g)$.  
Choosing communities $\mathbf g$ that maximize the modularity---or almost any other graph partitioning objective---is an $\mathcal{NP}$-hard problem; hence, this approach often focuses on finding heuristic or approximately optimal solutions~\citep{brandes2008modularity}.
We now detail two such approaches, the CNM method~\citep{clauset2004finding} and the Louvain method~\citep{blondel2008fast}.

\subsubsection{CNM method}
The CNM method is a greedy algorithm in which communities are iteratively merged so that each merger maximizes the change in modularity.
Let $\Delta Q_{g_ig_j}$ denote the change in modularity obtained when communities $g_i$ and $g_j$ are merged together.
For CNM, each node initially begins in its own singleton community.
In this case, the change of modularity obtained from merging nodes $i$ and $j$ into a single community is
\begin{equation}\label{eq_CNMstep}
    \Delta Q_{ij} = \begin{cases} 
    1/{m}-2d_id_j/(2m)^2 & \text{if }A_{ij}\neq 0 \\
    0 & \text{otherwise}
\end{cases}
\end{equation}
Once the initializations are complete, the algorithm repeatedly selects the merger that maximizes modularity, and then updates the $\Delta Q$ values, until no further mergers are possible.
CNM then selects communities according to the largest found modularity.
The computational complexity of the CNM algorithm is
$O(n^2\log n)$.
\subsubsection{Louvain method}
The Louvain method is divided into two phases.  
The first phase initializes each node into its own community 
and considers increasing modularity locally by changing a node's community label to that of a neighboring community. 
For example, the modularity change obtained by moving singular node $i$ from its community to a neighboring community $g$ is calculated by:
\begin{equation}\label{eq_LouvStep}
\begin{split}
\Delta Q_{-i,g}={\bigg [}{\frac {\Sigma _{in}+d_{i,in}}{2m}}-{\bigg (}{\frac {\Sigma _{tot}+d_{i}}{2m}}{\bigg )}^{2}{\bigg ]}
-{\bigg [}{\frac {\Sigma _{in}}{2m}}-{\bigg (}{\frac {\Sigma _{tot}}{2m}}{\bigg )}^{2}-{\bigg (}{\frac {d_{i}}{2m}}{\bigg )}^{2}{\bigg ]}
\end{split}
\end{equation}
where $\Sigma_{in}$  is sum of internal edge weights within $g$, $ \Sigma _{tot}$ is the sum of all edge weights incident to nodes in $g$, $d_{i,in} $ is the sum of the weights of edges between node $i$ and nodes in $g$, 
and $m$ is the sum of the edge weights of the entire network. 
For each node $i$, this change in modularity is computed for each community neighboring $i$, and $i$ is then assigned the community label that corresponds with the largest change.  
If no positive change in modularity is possible, then $i$ does not change its community label.  

In the next phase, the algorithm uses the final network from the first phase to generate a new network.  
Each community from phase one is condensed into a single node, and edges between two (possibly non-distinct) communities are condensed into a single edge with weight equal to the sum of the corresponding between-community edge weights.
In particular, self-loops in the new network may be formed if the corresponding community in the previous network has more than one node.
Phase one is then applied to this new network.

These two phases are repeated iteratively until no local improvement in the modularity is possible.
The Louvain algorithm examines fewer mergers (only adjacent ones) compared to CNM, and hence, the time complexity is reduced---it appears to run in linearithmic time for sparse graphs \citep{lancichinetti2009community}.

%
%
%
%
%

\subsection{Using edge weights to improve community detection}
 The quality of community detection may be improved by adding a pre-processing step to weight edges of the graph.  
 We briefly go over some proposed weighting schemes in the literature.

\subsubsection{Weights based on random walks}
Inspired by message propagation, \citet{de2013enhancing} introduces a method called weighted edge random walk-K path (WERW-Kpath) that runs a $k$-hop random walk and weights edges on the walk based on their ability in passing a message. 
WERW-Kpath initially assigns to each edge $e\in E$ a weight $w_e = 1$.
At each iteration, WERW-Kpath chooses a node at random and and runs a biased random walk of at most $k$ steps; the probability of traveling from node $i$ to $j$ is fraction of times that $j$ has been visited from node $i$ over the total number of times $i$ has been visited. 
Final edge weights are obtained after many iterations of WERW-Kpath.
\citet{lai2010enhanced} suggests a similar approach for using random walks to explore local structures of communities, and ultimately obtain an edge weight for edge $ij$ as the cosine distance between $i$ and $j$.


\subsubsection{Preprocessing the graph by analyzing cyclic topologies}\label{sec:LitRevWeighting}
The idea of using cyclic structure to identify communities relies on the fact that, if edges are more prevalent within communities than across communities, then flows will also tend to stay within communities. 
For example,  \citet{klymko2014using} focused on using triangles (cycles of length 3) to improve community detection in directed networks by weighting  the edges based on \textit{3-cycle cut ratio}.
\citet{radicchi2004defining} incorporate the importance of triangles using the edge clustering coefficient.
This method is similar to \citet{newman2004finding} in which at each iteration the edge with the smallest clustering coefficient is removed.
The complexity of the algorithm in~\citet{radicchi2004defining} is $O(m^4/n^2)$---roughly $O(n^2)$ for sparse graphs. 
This algorithm may perform poorly in graphs with few short cycles.

\citet{castellano2004self} modified the edge clustering coefficient for the weighted networks, in which number of cycles is multiplied by the edge weight. 
\citet{zhang2008clustering} extended this to bipartite graphs with 
cycles of even length.
\citet{vragovic2006network} introduced the node loop centrality measure such that communities are built around nodes with high centrality in graph cycles---this algorithm has time complexity of $O(nm)$.
Communities with rich cyclic structures--many short loops--are known to have higher quality; flows tend to stay longer and information disseminates faster in their nodes.

\citet{shakeri2017network} suggest weighting edges using a method called \textit{loop modulus} (LM). 
LM finds a probability distribution over all cycles such that two random cycles from this distribution have "minimum edge overlap."
The weight of an edge is the probability that this edge is overlapped.

\subsubsection{Other weighting method based on local and global network structure}\label{sec:k_path}
The use of global measures for determining edge weights requires the knowledge of entire networks. 
While these edge weights may be helpful in community detection, their use requires substantial computation, and sometimes the weights lack the necessary specificity to the community structures.
Local measures are often much faster to compute and may estimate local network subtlety that global methods cannot, but are unable to obtain thorough knowledge of global network structure.  
We now summarize the use of global and local measures in the literature.

\citet{newman2004finding} propose to use of edge-betweeness centrality (EBC) as a way to evaluate the community structure of networks. 
EBC was introduced by \citet{freeman1977set} and, for each edge $e$, measures the fraction of geodesic paths between all pairs of nodes that passes through $e$. 
\citet{khadivi2011network} proposed to use common neighbor ratio in addition to EBC to balance the local and global measurements and weigh the graph edges. 
This will allow for smaller weights of edges formed between clusters and also facilitates greedy algorithms in identifying the hierarchical structure of the clusters. 
The runtime complexity of this weighting strategy is dominated by computing the edge-betweenness centrality which is in $\mathcal{O}(nm)$ \citep{brandes2001faster} 
This runtime is prohibitive for large networks.
Furthermore, tuning the hyperparameters requires the design of efficient heuristics to improve the performance of the weighting algorithm.
\citet{zhang2015novel} focused on the problem of local and global weighting of the edges to improve the separation of communities. They proposed to measure the similarity of each pair of nodes based on their similarity to other nodes through an iterative function called SimRank.

\section{Renewal non-backtracking random walk (RNBRW)} \label{sec_3}



We now introduce our method for weighting edges called the renewal non-backtracking random walk (RNBRW).
This method is effective in capturing network cyclic structure while maintaining the computational advantages given by random walk methods. 
We begin with some definitions.

Recall the definition of a random walk in Section~\ref{sec:prelim}.
A non-backtracking random walk (NBRW) is a random walk that cannot return to a node $i$ in exactly two steps. 
That is, for all nodes $j$,
\begin{eqnarray}
    P(e_{\ell+1} = \vec{ji} | e_{\ell} = \vec{ij}) = 0.
\end{eqnarray}
Note that, while NBRW is not Markovian on the nodes of the graph, we can make NBRW Markovian on the edges by replacing each undirected edge $uv$ in $E$ with two directed edges $\vec{uv}$ and $\vec{vu}$. 

For an unbiased NBRW, the transition probability is \citep{kempton2016non}
\begin{equation}\label{eq_NBRW_transition}
\Pr(\vec{jk},\vec{ij})=
 \begin{cases} 
1/(d_j-1),& \text{ if } \vec{ij}, \vec{jk} \in E \text{ and } k \neq i, \\
0, & \text{otherwise},
\end{cases}
\end{equation}
where $d_j$ is the out-degree of node $j$.
A primary benefit of NBRW is that it explores the graph more efficiently than random walks; using NBRW, the proportion of time spent on each edge converges faster to the stationary distribution.

In previous works, such as \citet{de2013enhancing}, the random walk has been used to quantify the idea of information passing and to identify nodes or edges that are more capable of passing information.
Both random walk and NBRW behave similarly in the long run and their stationary distribution depends only on the degrees of nodes \citep{kempton2015high}. 
To reduce this dependability on node-degree, we keep the walk length small by stopping the NBRW.
However, instead of stopping NBRW after a given number of iterations, we instead make a stopping rule that may help in uncovering cyclic structure in the network, thereby obtaining the renewal non-backtracking random walk. 

\begin{definition}[Renewal non-backtracking random walk (RNBRW)]
A renewal non-backtracking random walk (RNBRW) is a NBRW that terminates and restarts once the walk completes a cycle.
Each NBRW begins at a directed edge selected completely at random.
Each new run of a NBRW is an iteration of RNBRW.
\end{definition}
\begin{definition}[Retraced edge]
	The retraced edge for a run of RNBRW is the edge that forms a cycle. 
	More precisely, supposing that $j$ is the first node revisited by a NBRW, and supposing that the NBRW travels to $j$ from $i$, then $\vec{ij}$ is the retraced edge.  
\end{definition}
\begin{definition}[Retracing probability]
	The retracing probability $\pi_e$ of an edge $e$ is the probability that $e$ is retraced in one iteration of RNBRW given that the NBRW terminates by forming a cycle. 
	The retracing probability $\pi_{ij}$ of an undirected edge $ij$ is obtained by summing the retracing probabilities for both $\vec{ij}$ and $\vec{ji}$.
\end{definition}

Figure~\ref{fig:RetracedNBRW} gives an example of one RNBRW iteration on a graph.
Heuristically, edges with large retracing probabilities are important to the formation of cycles.
As discussed in Section~\ref{sec:LitRevWeighting}, the presence of cycles is a helpful indicator in discovering communities.
Thus, it seems reasonable that quantifying the cyclic structure by the retracing probabilities may lead to improved performance of algorithms designed for detecting communities.
Figure~\ref{fig:HouseNBRW} gives undirected retracing probabilities for a small graph.

\begin{figure}
	\centering
	\includegraphics[clip, angle = 90, width=.4\columnwidth]{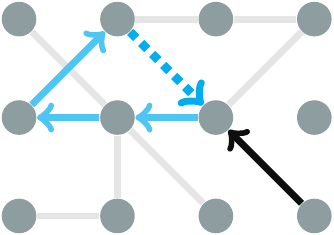}%
	\caption{One iteration of RNBRW on a graph. Gray lines denote edges not traversed by the NBRW.  The walk begins with the edge in black.  The walk completes after forming the cycle in blue. The dotted edge represents the retraced edge.}
	\label{fig:RetracedNBRW}
\end{figure}

	\begin{figure}
		\centering
		\includegraphics[clip,width=.5\columnwidth]{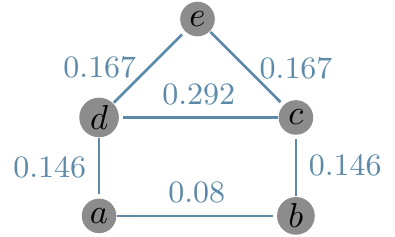}%
		\caption{Undirected retracing probabilities for a house graph with 5 nodes and 6 edges. 
		Edges with more overlapping loops have higher retracing probabilities.  
		For this graph, edge $dc$ appears to be the most important edge for the formation of cycles. 
		}\label{fig:HouseNBRW}
	\end{figure}

The retracing probabilities $\pi_e$ for each edge $e$ may be difficult to compute analytically, even for small graphs.  
Instead, as we show in Section~\ref{subsec:asymptoticprops} these probabilities can be estimated precisely by running many iterations of RNBRW.
Precisely, the estimated retracing probability for edge $e$, denoted $\hat \pi_e$, is the fraction of RNBRW runs for which $e$ is the retracing edge.
Note that $\sum_{e} \pi_e = \sum_{e} \hat \pi_e = 1$.

A particularly useful property of RNBRW is that iterations are mutually independent.
In particular, unlike other random walk methods (\textit{e.g.} \citet{de2013enhancing}), each run is not affected by weights obtained from previous runs.
Hence, multiple RNBRW instances can be run in parallel, and thus, retracing probabilities can be estimated quickly, even for large graphs.
For example, in \ref{sec:results}, we apply RNBRW on graphs containing one-million nodes.

Finally, note that a RNBRW iteration may terminate in one of two ways: either the run forms a cycle or the NBRW reaches a node with degree 1.
In the latter case, the iteration is discarded---no retracing edge is recorded---and a new iteration of RNBRW is started.
Fortunately, the probability that this case occurs may become vanishingly small as the number of nodes $n$ gets large.
For example, for Erd\H{o}s-R\'enyi graphs, the probability that a run terminates by visiting a node with degree 1 decreases exponentially with $n$~\citep{tishby2017distribution}.

\subsection{RNBRW for community detection}

RNBRW can be implemented to improve community detection algorithms as follows.
First, the estimated undirected retracing probability $\hat \pi_e$  is obtained for each edge $e$ through performing many iterations of RNBRW.
Then, each edge $e$ in the graph given the weight $w_e = 2m\pi_e$, where $m$ is the number of edges.
Next, the weighted degree $d_i$ in~\eqref{eq:definedi} is computed for each node $i$.
Note that multiplying $\hat \pi_e$ by $2m$ ensures that the sum of degrees $\sum_{i=1}^n d_i$ is the same when moving from unweighted edges to weighted edges. 
These weighted degrees are then plugged into the modularity function~\eqref{eq_Modularity}.
Finally, community detection algorithms such as CNM or Louvain are then performed using the weighted modularity function.

Figure~\ref{fig:weightLFR} demonstrates the usefulness of RNBRW weighting in community detection.
This figure shows an LFR benchmark graph~\citep{lancichinetti2008benchmark} with four communities before and after weighting RNBRW.
Weights of edges formed within communities are much larger on average than weights of edges across communities.

\begin{figure}
	\centering
	\subfloat[]{		
		\includegraphics[clip,width=.35\columnwidth]{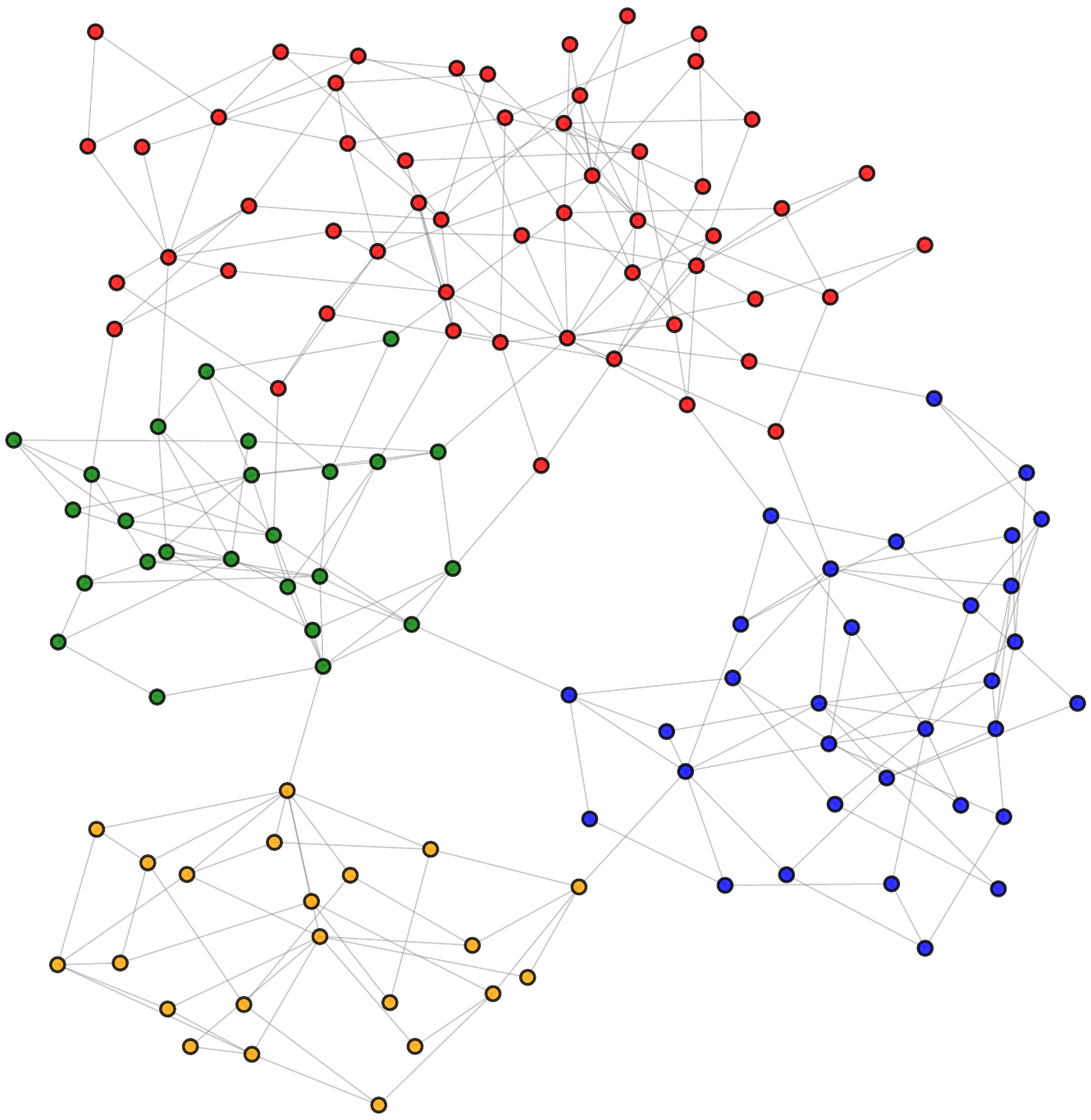}%
	}~~~~~~~~~
	\subfloat[]{%
		\includegraphics[clip,width=.35\columnwidth]{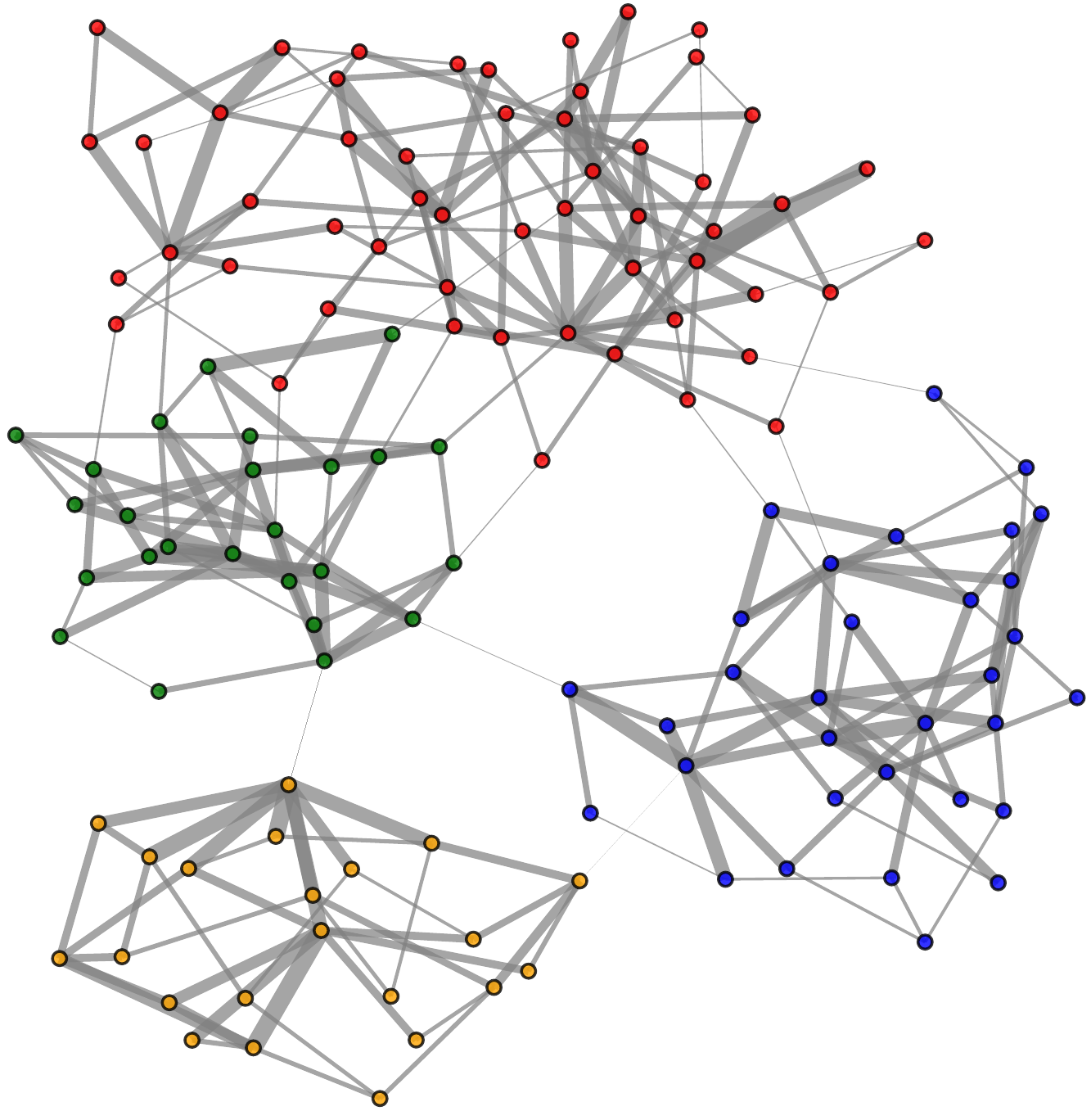}%
	}\\


	\caption{(a) An LFR benchmark graph. (b) The same graph weighted by RNBRW. Observe that within-community edges have larger weights than across-community edges.} 
	\label{fig:weightLFR}
\end{figure}

Figure~\ref{fig:iteration} gives a simple illustration of how many iterations of RNBRW is necessary in order to obtain satisfactory detection of communities after weighting.
As the figure indicates, 
there is a large improvement in the detection of communities when the number of walkers is on the order of the number of edges in the graph. 
Running additional walkers will still improve this precision;
however, the corresponding improvement in community detection is fairly small.
A theoretical justification for this number of iterations
can be found in Section~\ref{subsec:asymptoticprops}.


	\begin{figure}
	\centering
	\includegraphics[clip,width=.5\columnwidth]{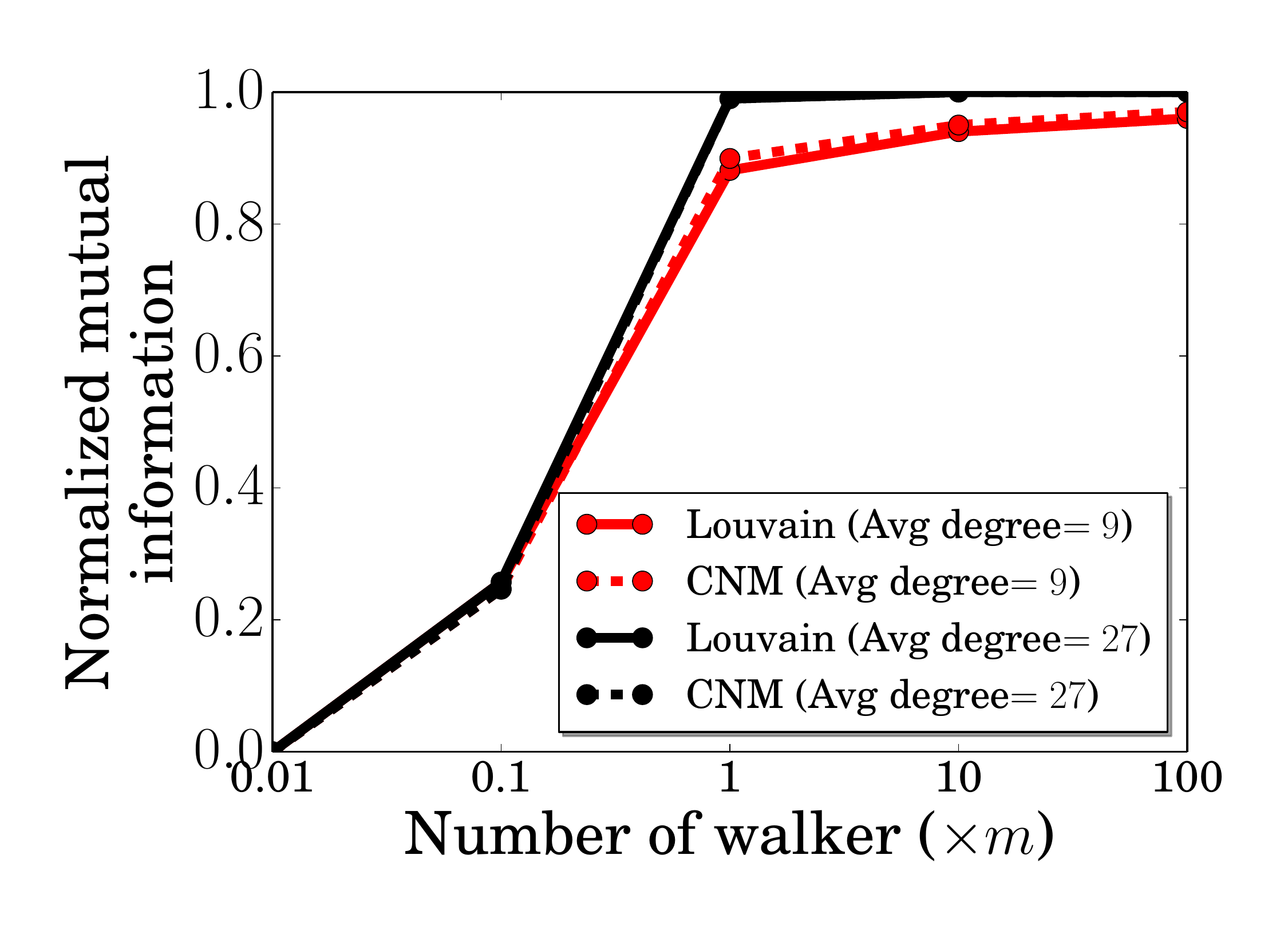}
	\caption{Performance of community detection with RNBRW weighting increasing the number of walkers in both CNM and Louvain using RNBRW weighting for LFR networks with $n = 10,000$ and average degrees $9$ and $27$.}
	\label{fig:iteration}
\end{figure}

\

\subsection{Algorithm for RNBRW}

For completeness, we now describe in detail the RNBRW algorithm. 
Each run of the algorithm runs a NBRW and returns the retracing edge (if any) as the sample of retracing edge. 
The number of times each edge has been retraced by a NBRW is approximately proportional to its retracing probability.
The algorithm only requires knowledge the graph and each run is independent of the other. 
Therefore, we can collect samples in parallel leading to fast convergence. 
Each run of the algorithm is comprised of the following steps:
 \begin{itemize}
	\item[1] Choose a random edge $\overrightarrow{v_0v_1}$ in $E$.
	\item[2] Form the walk $w = \left(\overrightarrow {v_0v_1}\right)$. 
	\item[3] (For  $k=\{1,\cdots\}$) The walker continues her walk from ${v_k}$ to a neighboring node $v_{k+1}\neq v_{k-1}$.
	\item[4] If $v_{k+1}$ has degree 1, return immediately to Step~1.  
	If $v_{k+1}$ is already in $w$, return $\overrightarrow{v_kv_{k+1}}$ as the retracing edge and return to Step~1.   
	Otherwise add $v_{k+1}$ to $w$ and go to Step 3 incrementing $k = k+1$.
\end{itemize}
In Algorithm~1, we present the pseudo-code for RNBRW.
By employing a swarm of walkers and recording their retracing edges, we are able to get an accurate estimate of the retracing probability for each edge.
Note that these walkers can be initialized independently from each other and retraced edges can be collated at the very end of the process. 
Therefore, one can execute this process as an array of jobs on a cluster of computers efficiently. 

\begin{figure}
	\begin{minipage}{\linewidth}\label{algNBRW}
		\begin{algorithm}[H]
			\caption{Algorithm for  RNBRW from a random directed edge $\vec{uv}$.}
			\begin{algorithmic}[1]
				\State \textit{walk} $\leftarrow$empty set
				\State \textbf{choose} edge $\vec{uv}$ at random
				\State \textbf{add} $\vec{uv}$ to \textit{walk}
				\While{True}
				\State \textit{nexts}$\leftarrow$neighbors of $v$
				\State \textbf{remove} $u$ from \textit{nexts}
				\State \textbf{if} nexts is empty \textbf{break}
				\State \textit{next} $\leftarrow$\textbf{choose} a node from \textit{nexts} randomly
				\If {\textit{next }$\in$\textit{walk}}
				\State \textbf{return} ($v$,\textit{next}) as the retracing edge
				\State \textbf{add} \textit{next} to \textit{walk}
				\State $u\leftarrow v$
				\State $v \leftarrow $\textit{next}
				\EndIf
				\EndWhile
			\end{algorithmic}
		\end{algorithm}
	\end{minipage}
\end{figure}


\section{Asymptotic properties of RNBRW \label{subsec:asymptoticprops}}

We now give proof that the RNBRW estimates for the retracing probabilities converge almost surely to the true retracing probabilities.
The proof also gives insight as far as how many iterations are necessary in order to obtain a good estimate of these probabilities.
We begin with a few definitions.

Suppose there are $\rho$ iterations of RNBRW that conclude by returning a retracing edge.
For a given directed edge $\vec e$, define $Y_{\vec e,k}$ as a random variable indicating that the $k$th iteration returns edge ${\vec e}$ as the retracing edge.
That is,
\begin{equation}
	Y_{\vec e,k}=
		\begin{cases}
		1,&  \text{if the $k$th RNBRW returns $e$ as the retracing edge}, \\
		0, &  otherwise.
		\end{cases}
\end{equation}
Note that $Y_{\vec e,k}$ is a \textit{Bernoulli}$(\pi_{\vec e})$ random variable and that
\begin{equation}
    \overline Y_{\vec e} = \frac{1}{\rho}\sum_{k = 1}^\rho Y_{\vec e,k} = \hat \pi_{\vec e}.
\end{equation}

We first make a statement about how close any arbitrary estimated retracing probability comes to the true probability.
\begin{lemma}\label{lemma:justhoeff}
    \begin{equation}
    \label{eq:justhoeff}
        P(|\overline{Y_{\vec e}}-\pi_{\vec e}| \geq \epsilon) \leq 2 \exp(-2\rho \epsilon^2).
    \end{equation}
\end{lemma}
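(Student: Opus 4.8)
The plan is to recognize the inequality \eqref{eq:justhoeff} as a direct instance of Hoeffding's inequality for bounded independent random variables, so that the real work is just verifying that the three hypotheses of that inequality hold for the collection $\{Y_{\vec e,k}\}_{k=1}^\rho$. No random-walk-specific machinery is needed beyond facts already established in the excerpt.

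First I would pin down the three structural facts. \emph{Independence:} as noted earlier, distinct iterations of RNBRW are mutually independent, since each run begins afresh from a uniformly chosen directed edge and is unaffected by the outcomes of previous runs; restricting attention to the $\rho$ runs that terminate by forming a cycle preserves this, so $Y_{\vec e,1},\ldots,Y_{\vec e,\rho}$ are independent. \emph{Boundedness:} each $Y_{\vec e,k}$ is an indicator, hence takes values in $[0,1]$, so I may take $a_k = 0$ and $b_k = 1$. \emph{Correct mean:} by definition $Y_{\vec e,k}$ is Bernoulli$(\pi_{\vec e})$, so $E[Y_{\vec e,k}] = \pi_{\vec e}$ and therefore $E[\overline Y_{\vec e}] = \pi_{\vec e}$, which is what lets me identify the deviation of the sample mean from its expectation with the deviation from $\pi_{\vec e}$.

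With these in hand I would invoke Hoeffding's inequality in the form
\begin{equation*}
    P\left(\left|\overline Y_{\vec e} - E[\overline Y_{\vec e}]\right| \geq \epsilon\right) \leq 2\exp\left(-\frac{2\rho^2 \epsilon^2}{\sum_{k=1}^\rho (b_k - a_k)^2}\right).
\end{equation*}
Substituting $b_k - a_k = 1$ gives $\sum_{k=1}^\rho (b_k - a_k)^2 = \rho$, and substituting $E[\overline Y_{\vec e}] = \pi_{\vec e}$ collapses the exponent to $-2\rho\epsilon^2$, yielding exactly \eqref{eq:justhoeff}; the leading factor of $2$ is precisely the two-sided (absolute-value) form of the bound.

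The only genuinely delicate point, and the part I would flag rather than paper over, is the status of $\rho$. The number of cycle-forming runs is itself random, so strictly one should either condition on the event $\{\rho = r\}$ and apply Hoeffding for each fixed $r$ (the bound being uniform in $r$, hence surviving the averaging), or simply declare $\rho$ to be the deterministic number of \emph{retained} runs and treat each retained $Y_{\vec e,k}$ as an independent Bernoulli$(\pi_{\vec e})$ draw under the conditional law given that the run formed a cycle. I expect the cleanest exposition fixes $\rho$; everything else is a routine substitution.
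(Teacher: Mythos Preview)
Your proposal is correct and matches the paper's approach: the paper's proof is a single sentence stating that the result follows from direct application of Hoeffding's inequality, and your write-up simply spells out the verification of the hypotheses (independence, boundedness in $[0,1]$, mean $\pi_{\vec e}$) that makes this application legitimate. Your remark about the status of $\rho$ is a fair point that the paper does not address; treating $\rho$ as fixed is indeed the cleanest reading.
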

\begin{proof}
    This result follows from direct application of Hoeffding's inequality~\citep{hoeffding1963probability}.
\end{proof}

We now obtain a result on how estimates of undirected retracing probabilities uniformly converge to the true probabilities.
Suppose that, in the original NBRW setup, undirected edge $e$ is replaced with directed edges $\vec e_1$ and $\vec e_2$. 
Hence, $\pi_{e} = \pi_{\vec e_1} + \pi_{\vec e_2}$, and
this undirected probability is estimated by $\overline Y_e = \overline Y_{\vec e_1} + \overline Y_{\vec e_2} = \hat \pi_{\vec e_1} + \hat \pi_{\vec e_2}$.
\begin{lemma}\label{lemma:maxsumhoeff}
    \begin{equation}
    \label{eq:maxsumhoeff}
        P(\max_{e} |\overline{Y_{e}}-\pi_{e}| \geq \epsilon) \leq 4 m \exp(-\rho \epsilon^2/2).
    \end{equation}
\end{lemma}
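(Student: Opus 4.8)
The plan is to reduce the uniform (maximum over all undirected edges) bound to the single-edge concentration already established in Lemma~\ref{lemma:justhoeff}, combined with a union bound over the $m$ undirected edges. The central observation is that an undirected retracing probability $\pi_e$ decomposes as $\pi_{\vec e_1} + \pi_{\vec e_2}$ and is estimated by $\overline Y_e = \overline Y_{\vec e_1} + \overline Y_{\vec e_2}$, where each $\overline Y_{\vec e_i}$ is an average of $\rho$ i.i.d.\ Bernoulli indicators. The error $\overline Y_e - \pi_e$ is therefore the sum of two errors, one for each orientation of the edge, and the whole quantity is still an average over the $\rho$ independent iterations.

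First I would rewrite $\overline Y_e = \frac{1}{\rho}\sum_{k=1}^{\rho}(Y_{\vec e_1,k} + Y_{\vec e_2,k})$ and set $Z_{e,k} = Y_{\vec e_1,k} + Y_{\vec e_2,k}$. Because a single RNBRW run retraces at most one directed edge, $\vec e_1$ and $\vec e_2$ cannot both be returned on the same iteration, so $Z_{e,k} \in \{0,1\}$ and is itself a Bernoulli$(\pi_e)$ variable with range of length $1$. This means I can apply Hoeffding's inequality directly to the $Z_{e,k}$ exactly as in Lemma~\ref{lemma:justhoeff}, obtaining for a fixed undirected edge $e$
\begin{equation}
    P\left(|\overline Y_e - \pi_e| \geq \epsilon\right) \leq 2\exp(-2\rho\epsilon^2).
\end{equation}
Next I would apply a union bound over the $m$ undirected edges, giving $P(\max_e |\overline Y_e - \pi_e| \geq \epsilon) \leq 2m\exp(-2\rho\epsilon^2)$, which is already stronger than the stated bound $4m\exp(-\rho\epsilon^2/2)$. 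To land exactly on the claimed inequality (which is presumably the form the authors want to propagate into a statement about the number of walkers), I would instead bound each orientation separately: by Lemma~\ref{lemma:justhoeff} each $\overline Y_{\vec e_i}$ deviates from $\pi_{\vec e_i}$ by at least $\epsilon/2$ with probability at most $2\exp(-2\rho(\epsilon/2)^2) = 2\exp(-\rho\epsilon^2/2)$, and a deviation of the sum by $\epsilon$ forces at least one of the two orientation-errors to exceed $\epsilon/2$. Summing the two directed events and then union-bounding over the $m$ edges yields the factor $4m$ and exponent $-\rho\epsilon^2/2$ exactly.

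The main obstacle — really the only subtle point — is justifying that the per-run increments behave like a clean Bernoulli sum so that Hoeffding applies. This rests on the fact that within a single iteration the retracing edge is unique (a NBRW terminates the instant it first revisits a node, recording one and only one directed edge), and that distinct iterations are mutually independent, a property emphasized earlier in the paper. Once these two facts are in hand, whether one treats $Z_{e,k}$ as a single bounded variable or splits into two directed pieces is a matter of which constant one prefers; both routes are routine. I would present the split-orientation argument since it reproduces the stated constants, noting in passing that the direct argument gives a marginally sharper bound.
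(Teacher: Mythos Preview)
Your proposal is correct and the split-orientation argument you present is exactly the paper's proof: decompose the undirected deviation into two directed deviations, observe that $|\overline Y_e-\pi_e|\geq\epsilon$ forces at least one directed error to exceed $\epsilon/2$, apply Lemma~\ref{lemma:justhoeff} at level $\epsilon/2$, and union-bound over the $2m$ directed edges. Your additional observation that $Z_{e,k}=Y_{\vec e_1,k}+Y_{\vec e_2,k}$ is itself Bernoulli (yielding the sharper $2m\exp(-2\rho\epsilon^2)$) is a nice bonus the paper does not mention.
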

\begin{proof}
To see this, note that
\begin{equation}
    \label{eq:halfthehoeff}
	P(|\overline{Y_{e}}-\pi_{e} |\geq \epsilon)=
		P(|\overline{Y}_{\vec e_1}-\pi_{\vec e_1} + \overline{Y}_{\vec e_2}-\pi_{\vec e_2} |\geq \epsilon) \leq
		P\left(\bigcup_{\ell = 1}^2\left\{|\overline{Y}_{\vec e_\ell}-\pi_{\vec e_\ell} |\geq \epsilon/2\right\}\right).
\end{equation}
Thus, by~\eqref{eq:halfthehoeff} and Lemma~\ref{lemma:justhoeff}, we have
    \begin{eqnarray}
		P(\max_{e}|\overline{Y_{e}}-\pi_{e} |\geq \epsilon) &=& 
			P\left(\bigcup_{e}\left\{\left|\overline{Y_{ e}}-\pi_{e} \right|\geq \epsilon\right\}\right) \nn &\leq&
			P\left(\bigcup_{e}\bigcup_{\ell = 1}^2\left\{\left|\overline{Y}_{ \vec e_{\ell}}-\pi_{\vec e_\ell} \right|\geq \epsilon/2\right\}\right)	\nn
		&\leq& \sum_{{e}}\sum_{{\ell = 1}}^2 P\left(\left|\overline{Y}_{\vec e_\ell}-\pi_{\vec e_\ell} \right|\geq \epsilon/2\right)\nn &\leq&
		 4m \exp(-\rho \epsilon^2/2).
    \end{eqnarray}
\end{proof}
\noindent
That is, we have shown that the estimated retracing probabilities converge exponentially fast to the true retracing probabilities as number of iterations of RNBRW (that terminate by forming a cycle) increases.  

Lemma~\ref{lemma:maxsumhoeff} also gives insight as far how many iterations of RNBRW are necessary to obtain good estimates of the retracing probability.  
We now give a result that demonstrates how accurate these estimates are if the number of RNBRW runs is approximately equal to the number of edges in the graph.
\begin{corollary}
    Given $\rho = m$ iterations of RNBRW, the estimated retracing probabilities are guaranteed to be within $\sqrt{\frac{4\log(2m)}{m}}$ of the true probabilities
    with probability no smaller than $(m-1)/m$.
    That is,
    \begin{equation}
        	P\left(\max_{e}|\overline{Y_{e}}-\pi_{e} |\geq \sqrt{\frac{4\log(2m)}{m}}\right) \leq \frac{1}{m}.
    \end{equation}
\end{corollary}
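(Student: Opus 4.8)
The plan is to obtain this directly from Lemma~\ref{lemma:maxsumhoeff}, which already provides a uniform deviation bound $P(\max_{e}|\overline{Y_{e}}-\pi_{e}| \geq \epsilon) \leq 4m\exp(-\rho\epsilon^2/2)$ valid for every $\epsilon > 0$ and every number of completed iterations $\rho$. Since the corollary fixes $\rho = m$ and a specific tolerance $\epsilon = \sqrt{4\log(2m)/m}$, the entire argument is a substitution into this bound followed by a simplification of the exponential factor. No new probabilistic input is needed.

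Concretely, I would first record that for the chosen tolerance $\epsilon^2 = 4\log(2m)/m$, so that the exponent becomes
\begin{equation}
    \frac{\rho\epsilon^2}{2} = \frac{m}{2}\cdot\frac{4\log(2m)}{m} = 2\log(2m).
\end{equation}
Next I would rewrite the exponential as a power, using $\exp(-2\log(2m)) = (2m)^{-2} = 1/(4m^2)$. Plugging these into the right-hand side of Lemma~\ref{lemma:maxsumhoeff} then yields
\begin{equation}
    P\left(\max_{e}|\overline{Y_{e}}-\pi_{e}| \geq \sqrt{\frac{4\log(2m)}{m}}\right) \leq 4m\cdot\frac{1}{4m^2} = \frac{1}{m},
\end{equation}
which is exactly the stated inequality. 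The dual phrasing---that the estimates fall within the tolerance with probability at least $(m-1)/m$---follows immediately by passing to the complementary event.

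The honest assessment is that there is no genuine obstacle here: the corollary is a worked numerical instantiation of Lemma~\ref{lemma:maxsumhoeff}, chosen so that the prefactor $4m$ and the exponential decay $(2m)^{-2}$ cancel into a clean $1/m$. The only point requiring the slightest care is verifying that the constant $4$ inside the logarithm and the exponent $2$ on $2m$ line up correctly; the value $\epsilon = \sqrt{4\log(2m)/m}$ is precisely engineered so that $\rho\epsilon^2/2 = 2\log(2m)$ and the factor of $4m$ is absorbed. I would present the two-line computation above and conclude.
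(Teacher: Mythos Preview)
Your proposal is correct and matches the paper's proof exactly: the paper also proves the corollary by substituting $\rho = m$ and $\epsilon = \sqrt{4\log(2m)/m}$ directly into the bound of Lemma~\ref{lemma:maxsumhoeff}. Your write-up simply spells out the arithmetic that the paper leaves implicit.
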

\begin{proof}
    To see this result, set $\rho = m$ and $\epsilon = \sqrt{\frac{4\log(2m)}{m}}$ in~\eqref{eq:maxsumhoeff}.
\end{proof}

Finally, we show that the estimated retracing probabilities converge almost surely and uniformly to the true retracing probabilities.
\begin{theorem}[Strong and uniform consistency]
    As the number of iterations of RNBRW that terminate in a cycle $\rho \to \infty$, then $\hat \pi_e \overset{a.s.}{\to} \pi_e$ for each undirected edge $e$.  Moreover this convergence is uniform across all edges $e$.
\end{theorem}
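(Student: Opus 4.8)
The plan is to leverage the uniform exponential bound already established in Lemma~\ref{lemma:maxsumhoeff} together with the first Borel--Cantelli lemma. First I would fix the probability space so that the relevant index is the number $\rho$ of RNBRW runs that terminate by forming a cycle. Since each run is independent and the retracing edge it produces (conditional on terminating in a cycle) has the same distribution $\{\pi_{\vec e}\}$, the indicators $Y_{\vec e,1}, Y_{\vec e,2}, \ldots$ are i.i.d.\ Bernoulli$(\pi_{\vec e})$, and $\overline{Y_{\vec e}}$ is their empirical mean over the first $\rho$ successful runs. Because the per-run probability of being discarded (hitting a degree-1 node) is strictly less than $1$, infinitely many runs terminate in a cycle almost surely, so $\rho \to \infty$ along this subsequence.

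Next, fix $\epsilon > 0$. Writing $\overline{Y_e}^{(\rho)}$ for the estimate based on $\rho$ successful runs, Lemma~\ref{lemma:maxsumhoeff} gives
\begin{equation}
    P\left(\max_e |\overline{Y_e}^{(\rho)} - \pi_e| \geq \epsilon\right) \leq 4m \exp(-\rho \epsilon^2/2).
\end{equation}
Summing over $\rho$, the right-hand side is a convergent geometric series with ratio $\exp(-\epsilon^2/2) < 1$, so $\sum_{\rho=1}^\infty P(\max_e |\overline{Y_e}^{(\rho)} - \pi_e| \geq \epsilon) < \infty$. By the first Borel--Cantelli lemma, the event $\{\max_e |\overline{Y_e}^{(\rho)} - \pi_e| \geq \epsilon\}$ occurs for only finitely many $\rho$ with probability one; equivalently, $\limsup_{\rho} \max_e |\overline{Y_e}^{(\rho)} - \pi_e| \leq \epsilon$ almost surely.

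To promote this to full almost-sure convergence I would apply the preceding step along a sequence $\epsilon_k = 1/k$. For each $k$ there is a probability-one event on which $\limsup_\rho \max_e |\overline{Y_e}^{(\rho)} - \pi_e| \leq 1/k$; intersecting these countably many probability-one events yields a single probability-one event on which $\max_e |\overline{Y_e}^{(\rho)} - \pi_e| \to 0$ as $\rho \to \infty$. Since $\hat \pi_e = \overline{Y_e}$ and the maximum is taken over all edges, this is exactly the claimed strong and uniform consistency.

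The calculations here are light; the one point requiring care --- and the main conceptual obstacle --- is the first paragraph, namely justifying that the estimator can be viewed as an empirical mean of i.i.d.\ Bernoulli indicators indexed by the random count of cycle-terminating runs. Once one argues that discarded runs carry no information and that the surviving runs are i.i.d.\ with the correct retracing law (so that $\rho \to \infty$ a.s.\ and the index is effectively deterministic), the Hoeffding-plus-Borel--Cantelli machinery closes the argument. Note also that, because the edge set is finite, uniformity over edges comes essentially for free; the real content is the almost-sure statement, which the summability of the exponential tail supplies.
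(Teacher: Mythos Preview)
Your proposal is correct and follows essentially the same route as the paper: both apply the exponential bound of Lemma~\ref{lemma:maxsumhoeff}, sum over $\rho$ to obtain a convergent series, and invoke the first Borel--Cantelli lemma. Your version is more careful in spelling out the $\epsilon_k = 1/k$ intersection argument and the probability-space setup, whereas the paper simply asserts that the conclusion follows directly from Borel--Cantelli.
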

	\begin{proof}
	Using~\eqref{eq:maxsumhoeff}, we observe that
	\begin{eqnarray}
		\lim_{\rho \to \infty}\sum_{n = 1}^\rho P(\max_{e} |\overline{Y_{e}}-\pi_{e}| \geq \epsilon) \leq
			\lim_{\rho \to \infty}\sum_{n = 1}^\rho 4 m \exp(-\rho \epsilon^2/2) < \infty.
	\end{eqnarray}	 
	The theorem then follows directly from the Borel-Cantelli Lemma \citep{rosenthal2006first}.
	\end{proof}

\section{Simulation results}\label{sec:results}
We investigate the performance of two community detection methods, Louvain and CNM, with and without the proposed weighting methods.
In addition to RNBRW, we consider these preprocessing weighting methods: WERW-Kpath, SimRank, Loop modulus, and the weighting method considered in~\citet{khadivi2011network}.
Additionially, we compare the performance from preprocessing with RNBRW to other scalable community detection algorithms such as
Info map \citep{rosvall2007maps}, Label propagation \citep{raghavan2007near}, Edge betweenness \citep{girvan2002community}, Spin glass \citep{reichardt2006statistical}, and Walk trap \citep{pons2005computing}. 
To perform our comparisons, we apply the community detection algorithms on LFR benchmarks \citep{lancichinetti2008benchmark} and measure their similarity to the ground truth data using normalized mutual information (NMI)  \citep{danon2005comparing}. 

\subsection{Comparison of weighting methods}


Figure~\ref{fig:InfoMixModulusNBRW} compares the performance in detecting communities for various weighting methods coupled with the Louvain algorithm as the mixing parameter $\mu$ increases.
Figure~\ref{fig:Khadivi_L} compares these methods for sparse graphs (average degree is $\approx \log(n)$) given a fixed mixing parameter but as the number of nodes increases.
In all cases, preprocessing the graph with RNBRW substantially improves the detection of communities when compared to the unweighted algorithm.
Additionally, the improvement provided by RNBRW is substantially larger than that from the WERW-KPath, SimRank, and the Khadavi \textit{et.~al.} methods.
The performance of RNBRW seems comparable to that of loop modulus.  
However, a good comparison between RNBRW and loop modulus is difficult as current methods for computing loop modulus require prohibitive computational cost for graphs containing more than a couple thousand nodes. 
Hence, loop modulus is omitted from 
Figure~\ref{fig:Khadivi_L}.
Appendix \ref{app_CNM} gives these results for the CNM algorithm.

\begin{figure}
	\centering
	\includegraphics[width=.60\textwidth]{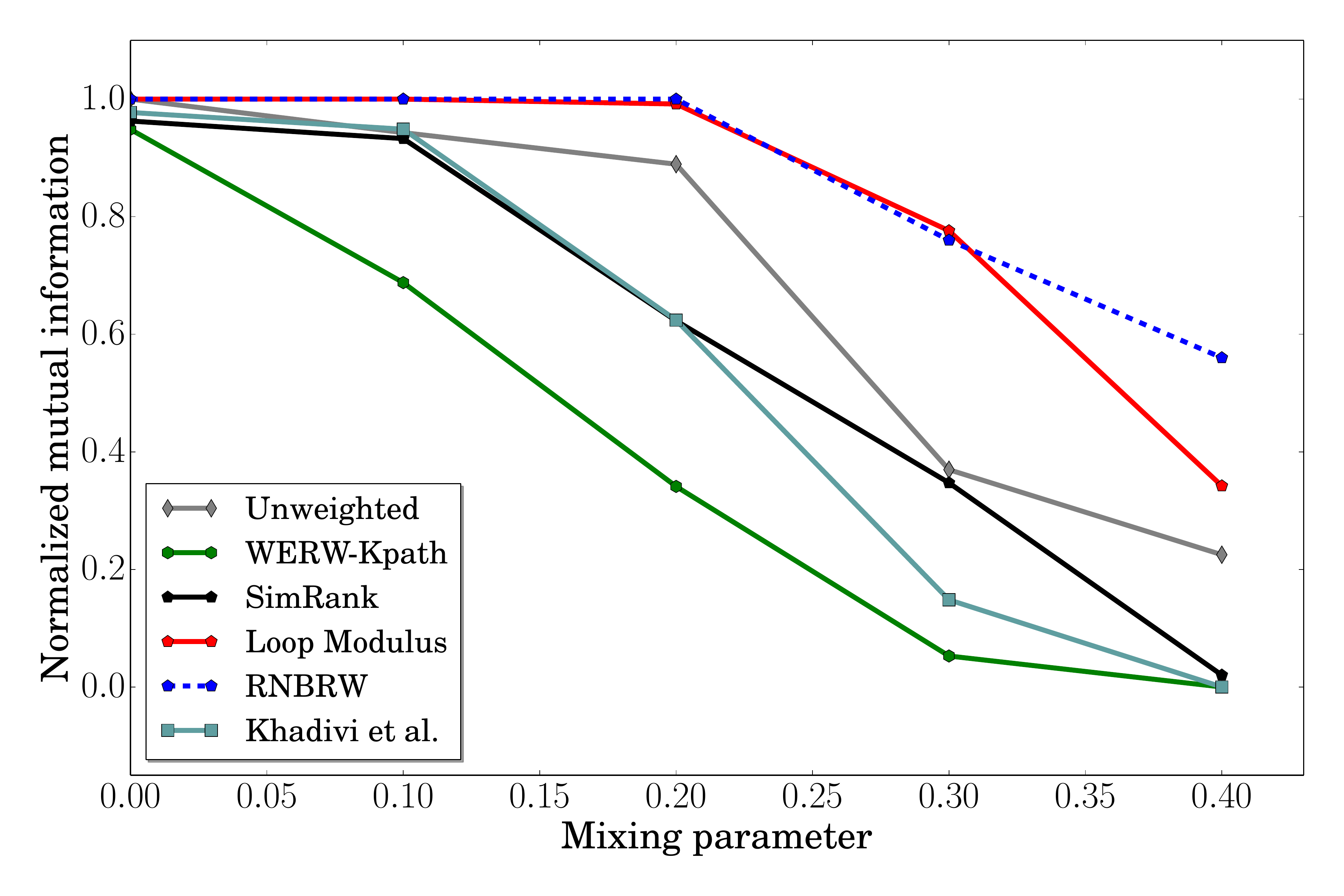}
	\caption{
	Performance analysis of community detection when coupling weighting methods with the Louvain algorithm. 
	We use LFR benchmark networks with  $n = 500$, average degree $7$, and community sizes ranging from $30$ to $70$.
		The mixing rate $\mu$ adjusts the ratio of within-communities links over all links.}\label{fig:InfoMixModulusNBRW}
			
\end{figure}

	\begin{figure}
		\centering

			\includegraphics[clip, width=.60\columnwidth]{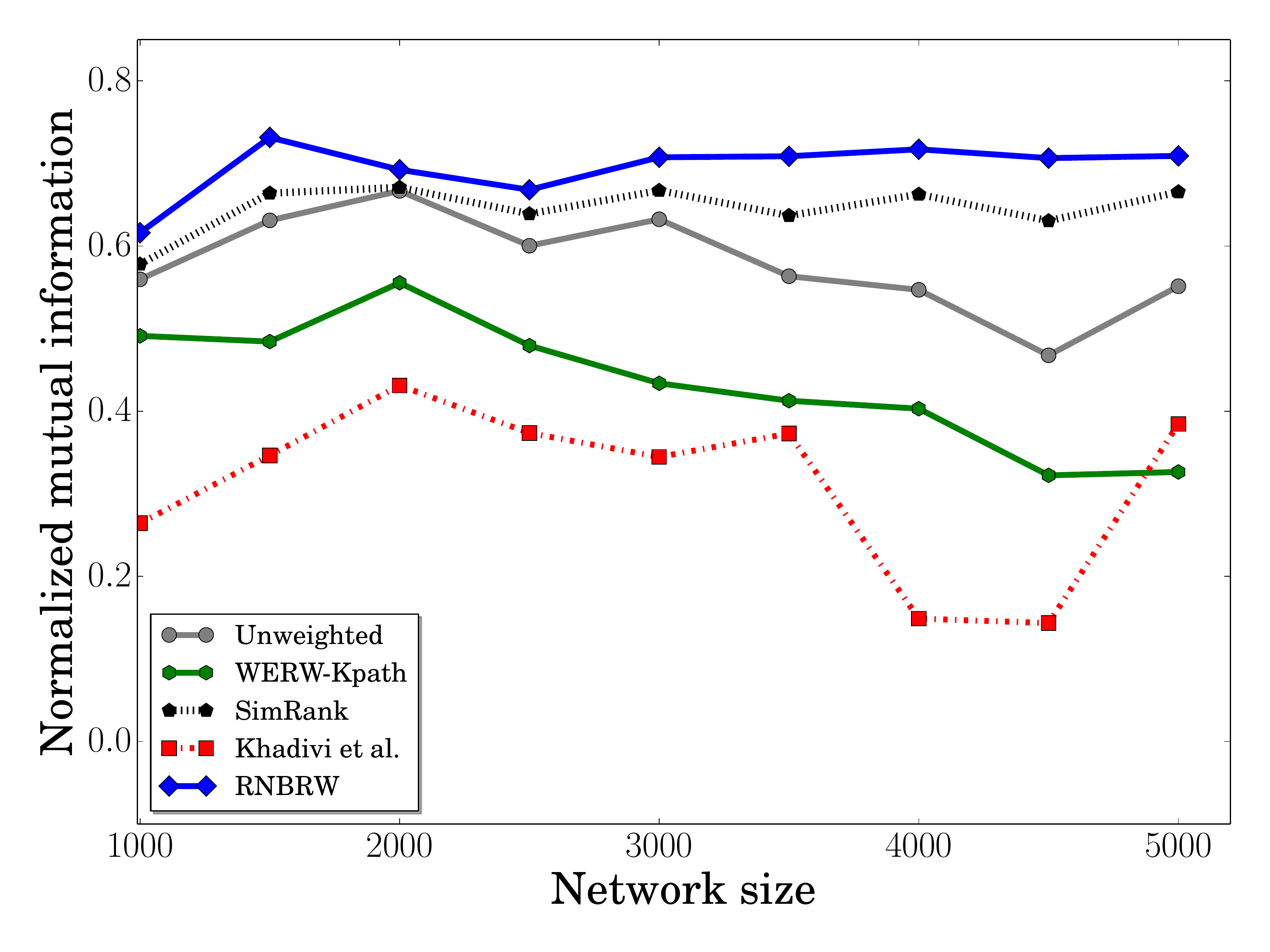}%
		\caption{Performance improvement of Louvain algorithm for different weighting algorithms  for sparse LFR benchmark graphs ($\mu = 0.4$, $\hat{d} = \log(n)$). Loop modulus is omitted in this example due to expensive computational costs.}\label{fig:Khadivi_L}
	\end{figure}


\subsection{Performance of RNBRW to other scalable algorithms}

We now compare the performance of Louvain and CNM equipped with RNBRW to five popular scalable algorithms: Infomap, Label propagation, Edge betweenness, Spinglass, and Walktrap.
We first begin our comparison on a small LFR benchmark graph with identical parameters to that of Figure~\ref{fig:InfoMixModulusNBRW}.
Figure~\ref{fig:InfoMixLFR} suggests that---on an albeit small graph of $n=500$ nodes---Louvain with RNBRW is as good or better at discovering communities as the aforementioned algorithms across all mixing parameters considered.

\begin{figure}
	\begin{center}
		\includegraphics[clip,width=.60\columnwidth]{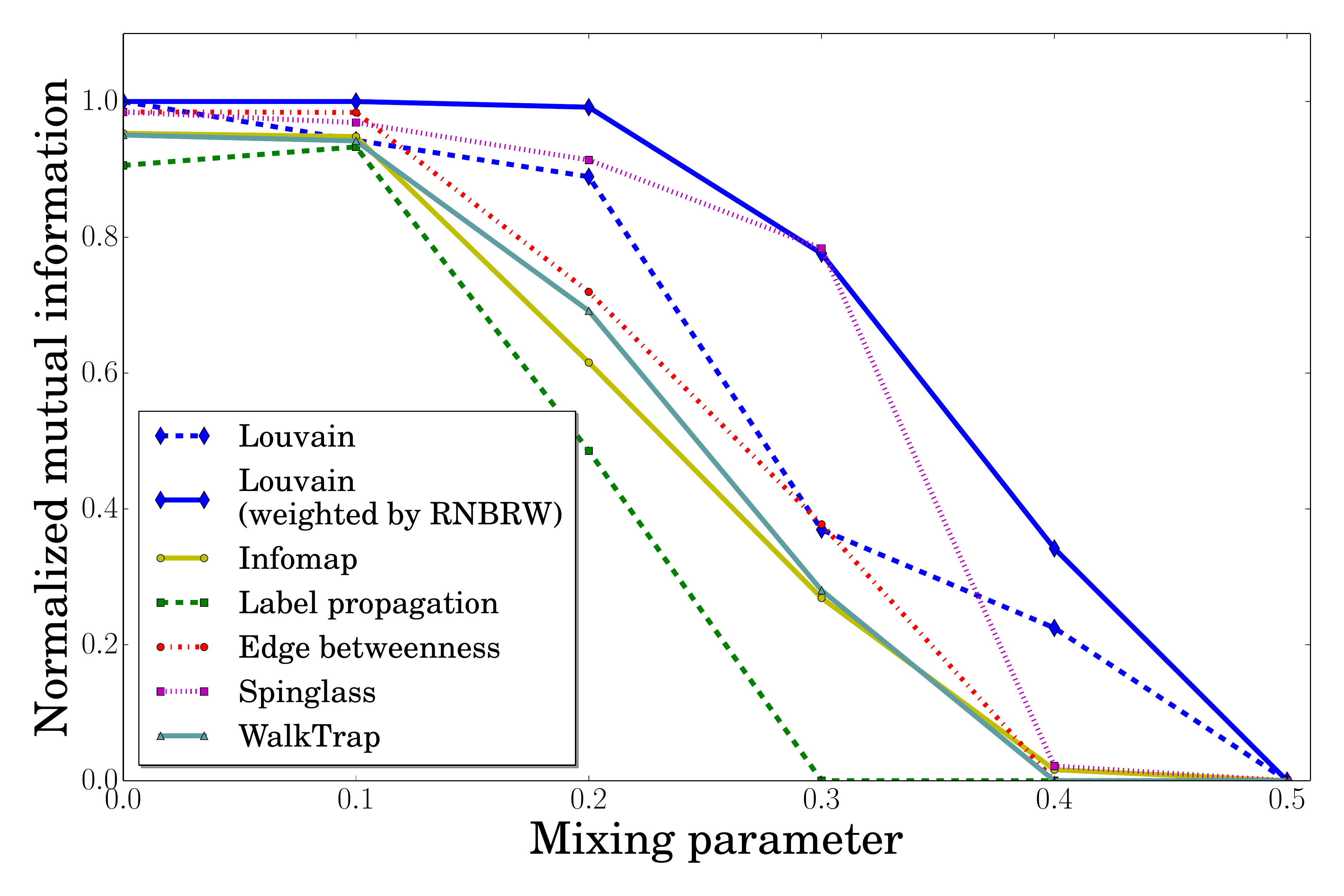}%
	\end{center}
	\caption{Comparing the performance of Louvain equipped with RNBRW to Infomap, Label propagation, Edge betweenness, Spinglass, and Walktrap algorithms. 
	The LFR benchmark networks have  $500$ nodes with average degree $7$, and community sizes ranging from $30$ to $70$.
		}\label{fig:InfoMixLFR}
\end{figure}

We now focus our analysis on larger graphs with varying degrees of sparseness.
Sparse graphs tend to be a particularly challenging case for community detection algorithms.
While \citet{zhao2012consistency} showed consistent detection of communities if the average degree grows at least logarithmically with network size, in practice, some algorithms may still struggle when average degree increases sub-linearly with the number of nodes.
Another problem that arises with modularity maximization methods is their resolution limit in detecting small communities.
This is a critical shortcoming since small size communities are common in large social networks and their size  are  not necessarily growing with graph size. 
\citet{fortunato2007resolution} identify that communities with number of internal edges less than $\sqrt{2|E|}$ are most likely misdetected. 
This leads to a resolution limit that holds back heuristics for  modularity maximization.

We test these community detection algorithms and see how they withstand the challenges of low average degree and existence of small communities for LFR benchmarks.
We consider average degrees that grow logarithmically with the number of nodes.
The community sizes vary between an upper and a lower bound that falls below the resolution limit and the average degree is changing for different size graphs. 

For these larger benchmarks\footnote{We use Beocat for running simulations on the large networks. Beocat is a computer cluster located in Kansas State University \url{https://support.beocat.ksu.edu/BeocatDocs/index.php/Main_Page}.} we use a constant mixing rate $\mu =0.3$ and consider graphs with $10,000$, $100,000$ and $1,000,000$ nodes.
Edge betweenness, Spinglass, and the previously considered weighting algorithms are not included due to prohibitive computational cost with larger network sizes.

We first analyze how the use of RNBRW can improve the performance of Louvain and CNM for sparse graphs.
Figure~\ref{fig:100k} plots the mutual information for graphs of $n = 10,000$ and $n = 100,000$ nodes for CNM and Louvain with and without RNBRW.  
We demonstrate that, even when the average degree is on the order of $\log(n)$, these algorithms equipped with RNBRW achieve a mutual information very close to 1.  
In particular, for $n = 10,000$ and average degree $\log(n)$, equipping CNM with RNBRW increases the mutual information from $0.263$ to $0.974$.


\begin{figure}
	\centering
	\includegraphics[clip,width=1\columnwidth]{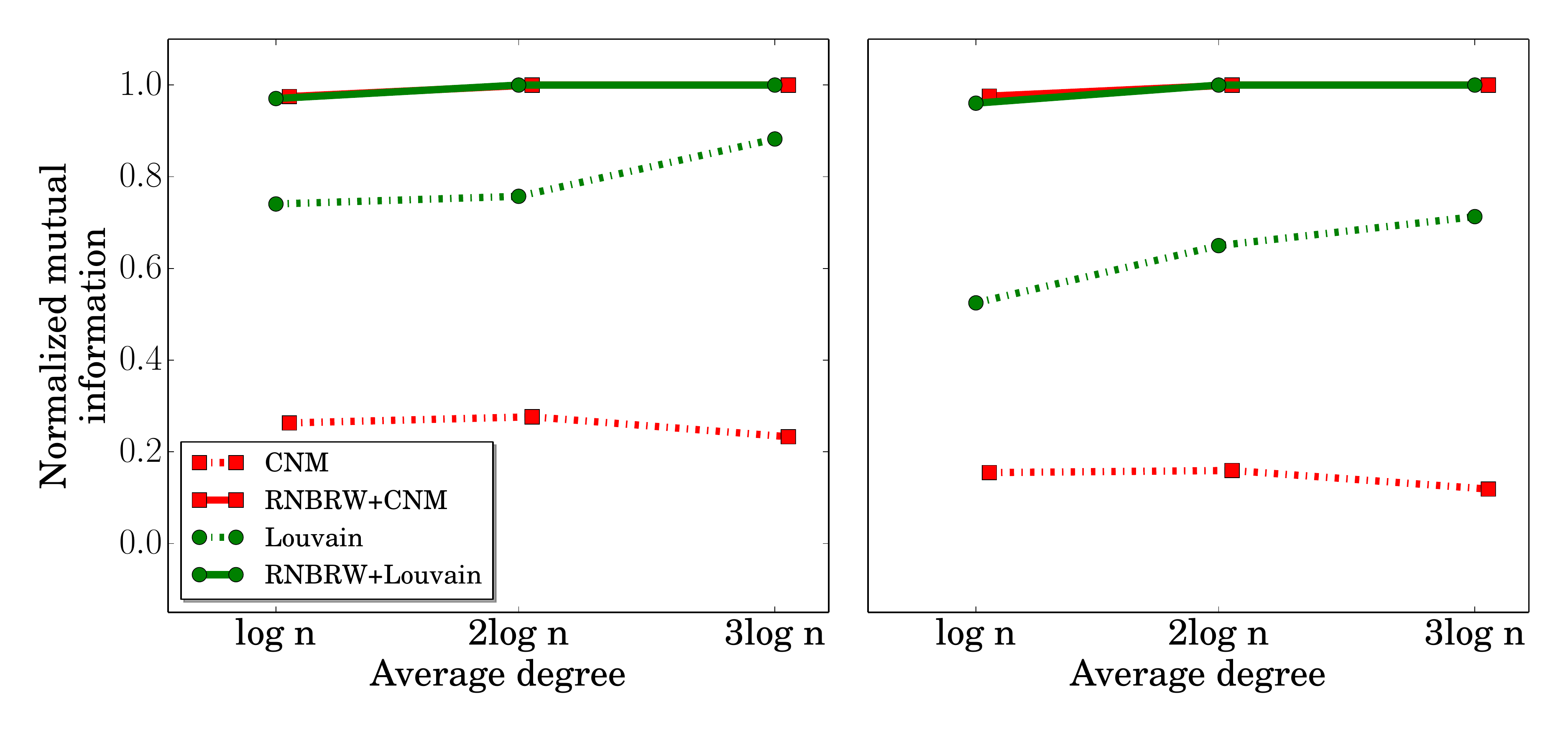}%
	\caption{Plots of NMI for sparse graphs.  
	We use LFR networks with $n = 10,000$ (left) and $n = 100,000$ nodes (right) with mixing parameter $\mu = 0.3$.  
	Average degree varies from $\log(n)$ to $3\log(n)$.
	Weighting by RNBRW leads to substantial performance increases for both the CNM and Louvain algorithms. }\label{fig:100k}
\end{figure}

	\begin{table}
		\resizebox{\textwidth}{!}{
\begin{tabular}{ |l|c|c|c|c|c|c|c| c| }
	\hline
	network size & average degree & Infomap & LP & WT &CNM & Louvain &    RNBRW+CNM&  RNBRW+Louvain\\
	\hline
	\multirow{3}{*}{$10,000$} & $\log n$  & $0.912$ &   $\textbf{0.992}$ & $0.763$ & $0.263$ & $0.740$   &  $0.974$ &   $0.970$\\
	& $2\log n$ &  $\textbf{1}$ &   $0.995$ & $\textbf{1}$  &$0.276$ & $0.757$   &   $\textbf{1}$ &   $\textbf{1}$\\
	& $3\log n$ &   $\textbf{1}$ &   $\textbf{1}$ & $\textbf{1}$ &$0.232$ & $0.882$   &   $\textbf{1}$ &   $\textbf{1}$\\
	\hline	
	\multirow{3}{*}{$100,000$}& $\log n$ & $0.73$ &   $\textbf{0.988}$ & $0.644$ & $0.154$ & $0.524$   &    $0.975$ &   $0.960$\\	
	& $2\log n$ &  $\textbf{1}$ &   $\textbf{1}$ & $\textbf{1}$ & $0.159$ & $0.649$   &   $\textbf{1}$ &   $\textbf{1}$\\	
	& $3\log n$ & $\textbf{1}$ &   $\textbf{1}$ & $\textbf{1}$ &$0.118$ & $0.713$   &     $\textbf{1}$ &   $\textbf{1}$\\	
	\hline			  
	$1,000,000$ & $\log n$ & $0.994$ &   $\textbf{0.998}$ & $-$ & $0.050$ & $0.192$   &   $0.989$ &   $0.969$\\	
	\hline
\end{tabular}}
\caption{Performance measured by NMI for scalable community detection algorithms for sparse LFR networks with $\mu =0.3$ and average degree $\approx \log n$.  
We compare CNM and Louvain with and without RNBRW weighting with Infomap, Label propagation (LP), and Walktrap (WT).  }
\label{tab_results}
\end{table}

We now compare CNM and Louvain equipped with RNBRW with other efficient community detection algorithms.
Summaries of these simulations are be found in Table~\ref{tab_results}. 
We first note that, again, equipping CNM and Louvain with RNBRW improves community detection dramatically; for example, in the 1,000,000 node example, equipping these community detection algorithms with RNBRW leads to an increase NMI in CNM from 0.050 to 0.989 and in Louvain from 0.192 to 0.969. 
Similar increases occur across all other network sizes and average degrees.
Moreover, the detection of communities for CNM and Louvain with RNBRW is comparable to other scalable community detection algorithms.

Although not included in the table, the runtime of RNBRW---especially coupled with Louvain---is much lower than the other considered methods.  
For example, because RNBRW is embarrassingly parallelizable, because Louvain has a small asymptotic runtime, and because RNBRW weights and the Louvain algorithm are performed sequentially, the total runtime of Louvain weighted by RNBRW is smaller than
Infomap $\mathcal{O}(n^2)$, Label propagation $\mathcal{O}(n^2)$, Spinglass $\mathcal{O}(n^{3.2})$ , and 
Walktrap $\mathcal{O}(n^2\log n)$. 
Therefore, adding a RNBRW preprocessing to Louvain seems to be a very reasonable method for obtaining efficient community detection with performance comparable to that of the best scalable community detection algorithms.

\section{Conclusion}

Communities can be identified by the ``richness'' of cycles; more short cycles occur within a community than across communities.  
Hence, existing community detection algorithms may be enhanced by incorporating information about the participation of edges in rhe cyclic structure of the graph. 
We develop renewal non-backtracking random walks (RNBRW) as a way of quantifying the cyclic stucture of a graph.
RNBRW quantifies edge importance as the likelihood of an edge completing a cycle in a non-backtracking random walk, providing a scalable alternative to analyse real-world networks.
We describe how RNBRW weights can be coupled with other popular scalable community detection algorithms, such as CNM and Louvain, to improve their performance.

%

We show that weighting the graph with RNBRW can substantially improve the detection of ground-truth communities.
This improvement is most notable when the network is sparse. 
Furthermore, the RNBRW preprocessing step can overcome the problem of detecting small communities known as resolution limit in modularity maximization methods.
We show that performance appears to be equal or superior to that of other weighting methods, and of other scalable (non-weighted) commnity detection algorithms.
In the case of large, sparse, networks, RNBRW may be quite effective; RNBRW can improve the efficacy of available community detection algorithms without sacrificing their computational efficiency.


\bibliographystyle{chicago}
\bibliography{refs}

\begin{thebibliography}{}

\bibitem[\protect\citeauthoryear{Alon, Benjamini, Lubetzky, and Sodin}{Alon
  et~al.}{2007}]{alon2007non}
Alon, N., I.~Benjamini, E.~Lubetzky, and S.~Sodin (2007).
\newblock Non-backtracking random walks mix faster.
\newblock {\em Communications in Contemporary Mathematics\/}~{\em 9\/}(04),
  585--603.

\bibitem[\protect\citeauthoryear{Blondel, Guillaume, Lambiotte, and
  Lefebvre}{Blondel et~al.}{2008}]{blondel2008fast}
Blondel, V.~D., J.-L. Guillaume, R.~Lambiotte, and E.~Lefebvre (2008).
\newblock Fast unfolding of communities in large networks.
\newblock {\em Journal of statistical mechanics: theory and experiment\/}~{\em
  2008\/}(10), P10008.

\bibitem[\protect\citeauthoryear{Brandes}{Brandes}{2001}]{brandes2001faster}
Brandes, U. (2001).
\newblock A faster algorithm for betweenness centrality.
\newblock {\em Journal of mathematical sociology\/}~{\em 25\/}(2), 163--177.

\bibitem[\protect\citeauthoryear{Brandes, Delling, Gaertler, Gorke, Hoefer,
  Nikoloski, and Wagner}{Brandes et~al.}{2008}]{brandes2008modularity}
Brandes, U., D.~Delling, M.~Gaertler, R.~Gorke, M.~Hoefer, Z.~Nikoloski, and
  D.~Wagner (2008).
\newblock On modularity clustering.
\newblock {\em IEEE Transactions on Knowledge and Data Engineering\/}~{\em
  20\/}(2), 172--188.

\bibitem[\protect\citeauthoryear{Castellano, Cecconi, Loreto, Parisi, and
  Radicchi}{Castellano et~al.}{2004}]{castellano2004self}
Castellano, C., F.~Cecconi, V.~Loreto, D.~Parisi, and F.~Radicchi (2004).
\newblock Self-contained algorithms to detect communities in networks.
\newblock {\em The European Physical Journal B-Condensed Matter and Complex
  Systems\/}~{\em 38\/}(2), 311--319.

\bibitem[\protect\citeauthoryear{Chung, Pandini, Annibale, Coolen, Thomas, and
  Fraternali}{Chung et~al.}{2015}]{chung2015bridging}
Chung, S.~S., A.~Pandini, A.~Annibale, A.~C. Coolen, N.~S.~B. Thomas, and
  F.~Fraternali (2015).
\newblock Bridging topological and functional information in protein
  interaction networks by short loops profiling.
\newblock {\em Scientific reports\/}~{\em 5}.

\bibitem[\protect\citeauthoryear{Clauset, Newman, and Moore}{Clauset
  et~al.}{2004}]{clauset2004finding}
Clauset, A., M.~E. Newman, and C.~Moore (2004).
\newblock Finding community structure in very large networks.
\newblock {\em Physical review E\/}~{\em 70\/}(6), 066111.

\bibitem[\protect\citeauthoryear{Danon, Diaz-Guilera, Duch, and Arenas}{Danon
  et~al.}{2005}]{danon2005comparing}
Danon, L., A.~Diaz-Guilera, J.~Duch, and A.~Arenas (2005).
\newblock Comparing community structure identification.
\newblock {\em Journal of Statistical Mechanics: Theory and Experiment\/}~{\em
  2005\/}(09), P09008.

\bibitem[\protect\citeauthoryear{De~Meo, Ferrara, Fiumara, and Provetti}{De~Meo
  et~al.}{2013}]{de2013enhancing}
De~Meo, P., E.~Ferrara, G.~Fiumara, and A.~Provetti (2013).
\newblock Enhancing community detection using a network weighting strategy.
\newblock {\em Information Sciences\/}~{\em 222}, 648--668.

\bibitem[\protect\citeauthoryear{Fitzner and van~der Hofstad}{Fitzner and
  van~der Hofstad}{2013}]{fitzner2013non}
Fitzner, R. and R.~van~der Hofstad (2013).
\newblock Non-backtracking random walk.
\newblock {\em Journal of Statistical Physics\/}~{\em 150\/}(2), 264--284.

\bibitem[\protect\citeauthoryear{Fortunato}{Fortunato}{2010}]{fortunato2010community}
Fortunato, S. (2010).
\newblock Community detection in graphs.
\newblock {\em Physics reports\/}~{\em 486\/}(3), 75--174.

\bibitem[\protect\citeauthoryear{Fortunato and Barth{\'e}lemy}{Fortunato and
  Barth{\'e}lemy}{2007}]{fortunato2007resolution}
Fortunato, S. and M.~Barth{\'e}lemy (2007).
\newblock Resolution limit in community detection.
\newblock {\em Proceedings of the National Academy of Sciences\/}~{\em
  104\/}(1), 36--41.

\bibitem[\protect\citeauthoryear{Freeman}{Freeman}{1977}]{freeman1977set}
Freeman, L.~C. (1977).
\newblock A set of measures of centrality based on betweenness.
\newblock {\em Sociometry\/}, 35--41.

\bibitem[\protect\citeauthoryear{Girvan and Newman}{Girvan and
  Newman}{2002}]{girvan2002community}
Girvan, M. and M.~E. Newman (2002).
\newblock Community structure in social and biological networks.
\newblock {\em Proceedings of the national academy of sciences\/}~{\em
  99\/}(12), 7821--7826.

\bibitem[\protect\citeauthoryear{Hendrickson and Kolda}{Hendrickson and
  Kolda}{2000}]{hendrickson2000graph}
Hendrickson, B. and T.~G. Kolda (2000).
\newblock Graph partitioning models for parallel computing.
\newblock {\em Parallel computing\/}~{\em 26\/}(12), 1519--1534.

\bibitem[\protect\citeauthoryear{Hoeffding}{Hoeffding}{1963}]{hoeffding1963probability}
Hoeffding, W. (1963).
\newblock Probability inequalities for sums of bounded random variables.
\newblock {\em Journal of the American statistical association\/}~{\em
  58\/}(301), 13--30.

\bibitem[\protect\citeauthoryear{Hughes}{Hughes}{1995}]{hughes1995random}
Hughes, B.~D. (1995).
\newblock Random walks and random environments.
\newblock {\em Oxford\/}~{\em 2}, 1995--1996.

\bibitem[\protect\citeauthoryear{Kempton}{Kempton}{2015}]{kempton2015high}
Kempton, M. (2015).
\newblock {\em High Dimensional Spectral Graph Theory and Non-backtracking
  Random Walks on Graphs}.
\newblock University of California, San Diego.

\bibitem[\protect\citeauthoryear{Kempton}{Kempton}{2016}]{kempton2016non}
Kempton, M. (2016).
\newblock Non-backtracking random walks and a weighted {I}hara's theorem.
\newblock {\em arXiv preprint arXiv:1603.05553\/}.

\bibitem[\protect\citeauthoryear{Khadivi, Rad, and Hasler}{Khadivi
  et~al.}{2011}]{khadivi2011network}
Khadivi, A., A.~A. Rad, and M.~Hasler (2011).
\newblock Network community-detection enhancement by proper weighting.
\newblock {\em Physical Review E\/}~{\em 83\/}(4).

\bibitem[\protect\citeauthoryear{Kim and Kim}{Kim and
  Kim}{2005}]{kim2005cyclic}
Kim, H.-J. and J.~M. Kim (2005).
\newblock Cyclic topology in complex networks.
\newblock {\em Physical Review E\/}~{\em 72\/}(3), 036109.

\bibitem[\protect\citeauthoryear{Klymko, Gleich, and Kolda}{Klymko
  et~al.}{2014}]{klymko2014using}
Klymko, C., D.~Gleich, and T.~G. Kolda (2014).
\newblock Using triangles to improve community detection in directed networks.
\newblock {\em arXiv preprint arXiv:1404.5874\/}.

\bibitem[\protect\citeauthoryear{Lai, Lu, and Nardini}{Lai
  et~al.}{2010}]{lai2010enhanced}
Lai, D., H.~Lu, and C.~Nardini (2010).
\newblock Enhanced modularity-based community detection by random walk network
  preprocessing.
\newblock {\em Physical Review E\/}~{\em 81\/}(6), 066118.

\bibitem[\protect\citeauthoryear{Lancichinetti and Fortunato}{Lancichinetti and
  Fortunato}{2009}]{lancichinetti2009community}
Lancichinetti, A. and S.~Fortunato (2009).
\newblock Community detection algorithms: a comparative analysis.
\newblock {\em Physical review E\/}~{\em 80\/}(5), 056117.

\bibitem[\protect\citeauthoryear{Lancichinetti, Fortunato, and
  Radicchi}{Lancichinetti et~al.}{2008}]{lancichinetti2008benchmark}
Lancichinetti, A., S.~Fortunato, and F.~Radicchi (2008).
\newblock Benchmark graphs for testing community detection algorithms.
\newblock {\em Physical review E\/}~{\em 78\/}(4), 046110.

\bibitem[\protect\citeauthoryear{Newman}{Newman}{2003}]{newman2003structure}
Newman, M.~E. (2003).
\newblock The structure and function of complex networks.
\newblock {\em SIAM review\/}~{\em 45\/}(2), 167--256.

\bibitem[\protect\citeauthoryear{Newman and Girvan}{Newman and
  Girvan}{2004}]{newman2004finding}
Newman, M.~E. and M.~Girvan (2004).
\newblock Finding and evaluating community structure in networks.
\newblock {\em Physical review E\/}~{\em 69\/}(2), 026113.

\bibitem[\protect\citeauthoryear{Pons and Latapy}{Pons and
  Latapy}{2005}]{pons2005computing}
Pons, P. and M.~Latapy (2005).
\newblock Computing communities in large networks using random walks.
\newblock In {\em International Symposium on Computer and Information
  Sciences}, pp.\  284--293. Springer.

\bibitem[\protect\citeauthoryear{Radicchi, Castellano, Cecconi, Loreto, and
  Parisi}{Radicchi et~al.}{2004}]{radicchi2004defining}
Radicchi, F., C.~Castellano, F.~Cecconi, V.~Loreto, and D.~Parisi (2004).
\newblock Defining and identifying communities in networks.
\newblock {\em Proceedings of the National Academy of Sciences of the United
  States of America\/}~{\em 101\/}(9), 2658--2663.

\bibitem[\protect\citeauthoryear{Raghavan, Albert, and Kumara}{Raghavan
  et~al.}{2007}]{raghavan2007near}
Raghavan, U.~N., R.~Albert, and S.~Kumara (2007).
\newblock Near linear time algorithm to detect community structures in
  large-scale networks.
\newblock {\em Physical review E\/}~{\em 76\/}(3), 036106.

\bibitem[\protect\citeauthoryear{Reichardt and Bornholdt}{Reichardt and
  Bornholdt}{2006}]{reichardt2006statistical}
Reichardt, J. and S.~Bornholdt (2006).
\newblock Statistical mechanics of community detection.
\newblock {\em Physical Review E\/}~{\em 74\/}(1), 016110.

\bibitem[\protect\citeauthoryear{Rosenthal}{Rosenthal}{2006}]{rosenthal2006first}
Rosenthal, J.~S. (2006).
\newblock {\em A first look at rigorous probability theory}.
\newblock World Scientific Publishing Company.

\bibitem[\protect\citeauthoryear{Rosvall and Bergstrom}{Rosvall and
  Bergstrom}{2007}]{rosvall2007maps}
Rosvall, M. and C.~Bergstrom (2007).
\newblock Maps of information flow reveal community structure in complex
  networks.
\newblock Technical report, Technical report.

\bibitem[\protect\citeauthoryear{Scott}{Scott}{2017}]{scott2017social}
Scott, J. (2017).
\newblock {\em Social network analysis}.
\newblock Sage.

\bibitem[\protect\citeauthoryear{Shakeri, Poggi-Corradini, Albin, and
  Scoglio}{Shakeri et~al.}{2017}]{shakeri2017network}
Shakeri, H., P.~Poggi-Corradini, N.~Albin, and C.~Scoglio (2017).
\newblock Network clustering and community detection using modulus of families
  of loops.
\newblock {\em Physical Review E\/}~{\em 95\/}(1), 012316.

\bibitem[\protect\citeauthoryear{Sun}{Sun}{2014}]{sun2014weighting}
Sun, P.~G. (2014).
\newblock Weighting links based on edge centrality for community detection.
\newblock {\em Physica A: Statistical Mechanics and Its Applications\/}~{\em
  394}, 346--357.

\bibitem[\protect\citeauthoryear{Tishby, Biham, and Katzav}{Tishby
  et~al.}{2017}]{tishby2017distribution}
Tishby, I., O.~Biham, and E.~Katzav (2017).
\newblock The distribution of first hitting times of non-backtracking random
  walks on {E}rd{\H{o}}s--{R}{\'e}nyi networks.
\newblock {\em Journal of Physics A: Mathematical and Theoretical\/}~{\em
  50\/}(20).

\bibitem[\protect\citeauthoryear{Vragovi{\'c} and Louis}{Vragovi{\'c} and
  Louis}{2006}]{vragovic2006network}
Vragovi{\'c}, I. and E.~Louis (2006).
\newblock Network community structure and loop coefficient method.
\newblock {\em Physical Review E\/}~{\em 74\/}(1), 016105.

\bibitem[\protect\citeauthoryear{Zhang, Zhou, Liang, Zhao, and Li}{Zhang
  et~al.}{2015}]{zhang2015novel}
Zhang, H., C.~Zhou, X.~Liang, X.~Zhao, and Y.~Li (2015).
\newblock A novel edge weighting method to enhance network community detection.
\newblock In {\em 2015 IEEE International Conference on Systems, Man, and
  Cybernetics}, pp.\  167--172. IEEE.

\bibitem[\protect\citeauthoryear{Zhang, Wang, Li, Li, Di, and Fan}{Zhang
  et~al.}{2008}]{zhang2008clustering}
Zhang, P., J.~Wang, X.~Li, M.~Li, Z.~Di, and Y.~Fan (2008).
\newblock Clustering coefficient and community structure of bipartite networks.
\newblock {\em Physica A: Statistical Mechanics and its Applications\/}~{\em
  387\/}(27), 6869--6875.

\bibitem[\protect\citeauthoryear{Zhao, Levina, Zhu, et~al.}{Zhao
  et~al.}{2012}]{zhao2012consistency}
Zhao, Y., E.~Levina, J.~Zhu, et~al. (2012).
\newblock Consistency of community detection in networks under degree-corrected
  stochastic block models.
\newblock {\em The Annals of Statistics\/}~{\em 40\/}(4), 2266--2292.

\end{thebibliography}

\appendix
\section{Results for CNM algorithm}\label{app_CNM}
In section~\ref{sec:results}, we investigate the performance of the Louvain algorithm weighted by RNBRW.  
In this section, we include similar results for the CNM method.

Figure~\ref{fig:InfoMixMlusNBRW} compares the performance of the CNM algorithm equipped with RNBRW to that of other weighing methods (WERW-Kpath, Khadivi, SimRank) as the mixing parameter $\mu$ varies.
Figure~\ref{fig:Khadivi} makes this comparison of performance for sparse graphs as the node size increases.
As before, preprocessing the graph with RNBRW substantially improves the detection of communities when compared to the unweighted algorithm.
Additionally, the improvement provided by RNBRW is substantially larger than that from the WERW-KPath, SimRank, and the Khadavi \textit{et.~al.} methods.

Figure~\ref{fig:InfoMixLFRn} compares CNM equipped with RNBRW to other efficient community detection algorithms (Infomap, Label propagation, Edge betweenness, Spinglass, and Walktrap) across different values of $\mu$.  
Results suggest that CNM with RNBRW is as good or better at discovering communities than the other scalable algorithms across all mixing parameters considered.


\begin{figure}
	\centering		
		\includegraphics[width=.65\textwidth]{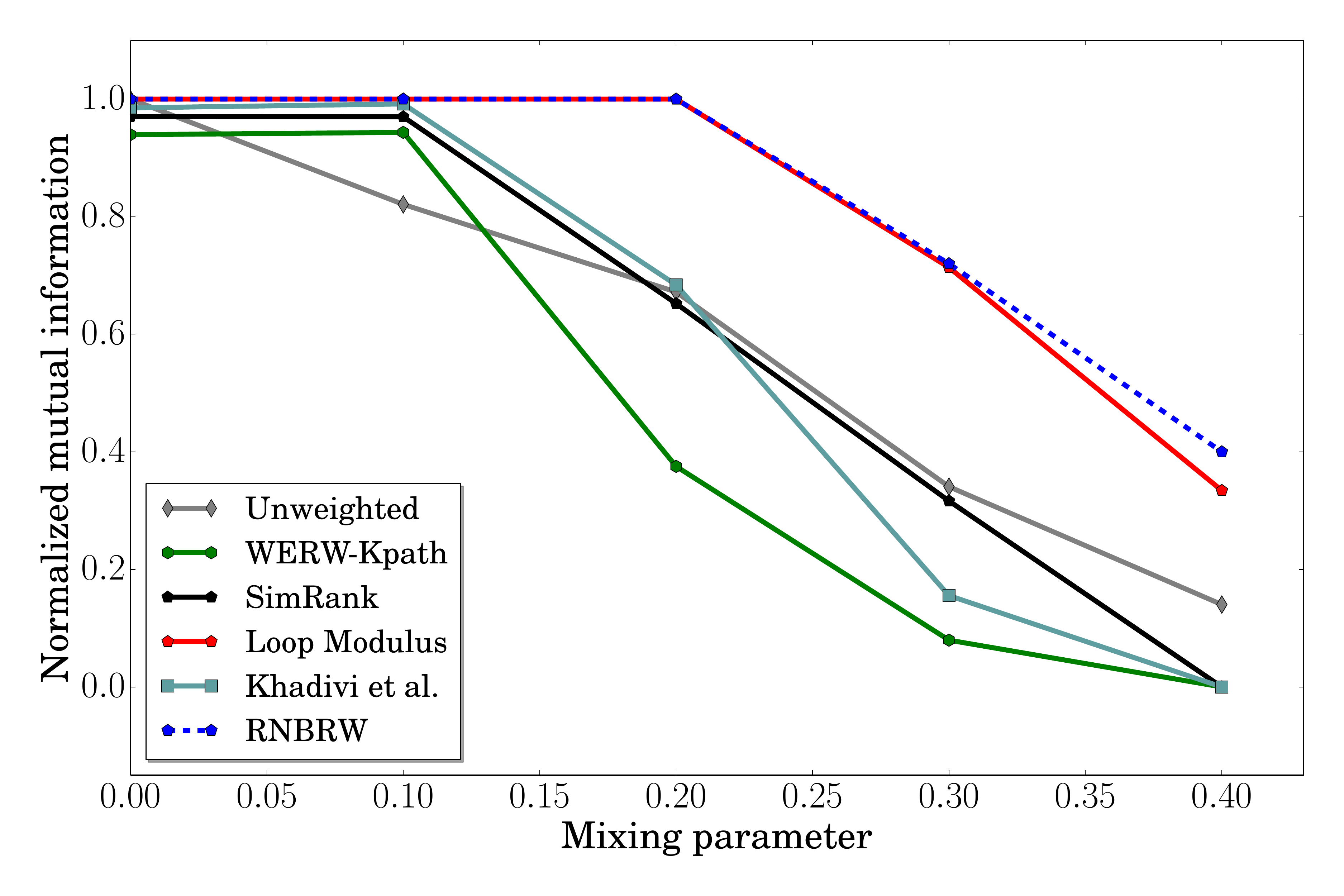}%
	
	\caption{Performance analysis of community detection when coupling weighting methods with the CNM algorithm. 
	We use LFR benchmark networks with  $n = 500$, average degree $7$, and community sizes ranging from $30$ to $70$.
		The mixing rate $\mu$ adjusts the ratio of within-communities links over all links.}\label{fig:InfoMixMlusNBRW}
\end{figure}

\begin{figure}
		\centering
			\includegraphics[clip,width=.6\columnwidth]{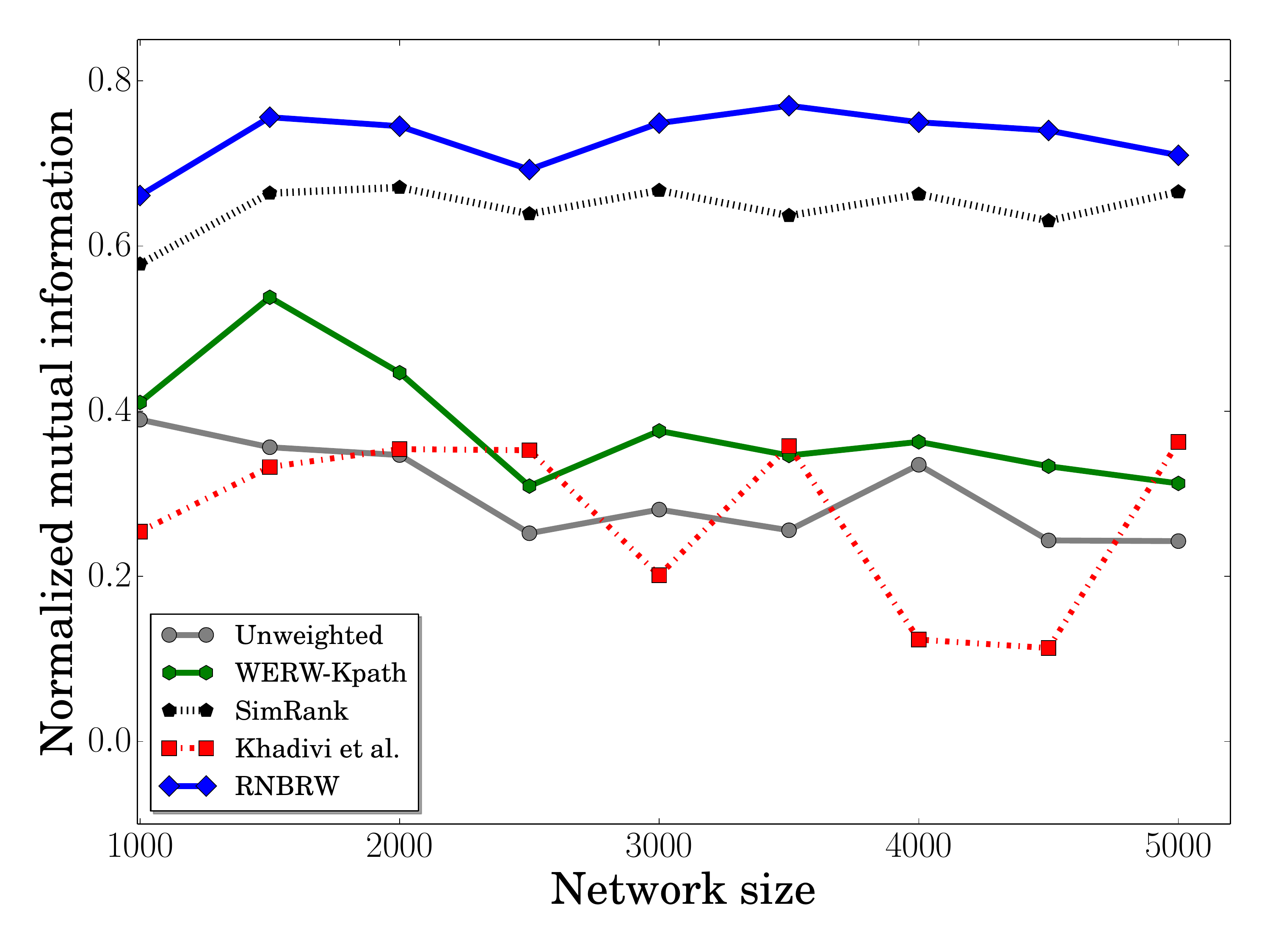}%
		\caption{Performance improvement of CNM algorithm for different weighting algorithms for sparse LFR benchmark graphs ($\mu = 0.4$, $\hat{d} = \log(n)$). Loop modulus is omitted in this example due to expensive computational costs.}\label{fig:Khadivi}
	\end{figure}

\begin{figure}
	\begin{center}
		\includegraphics[clip,width=.60\columnwidth]{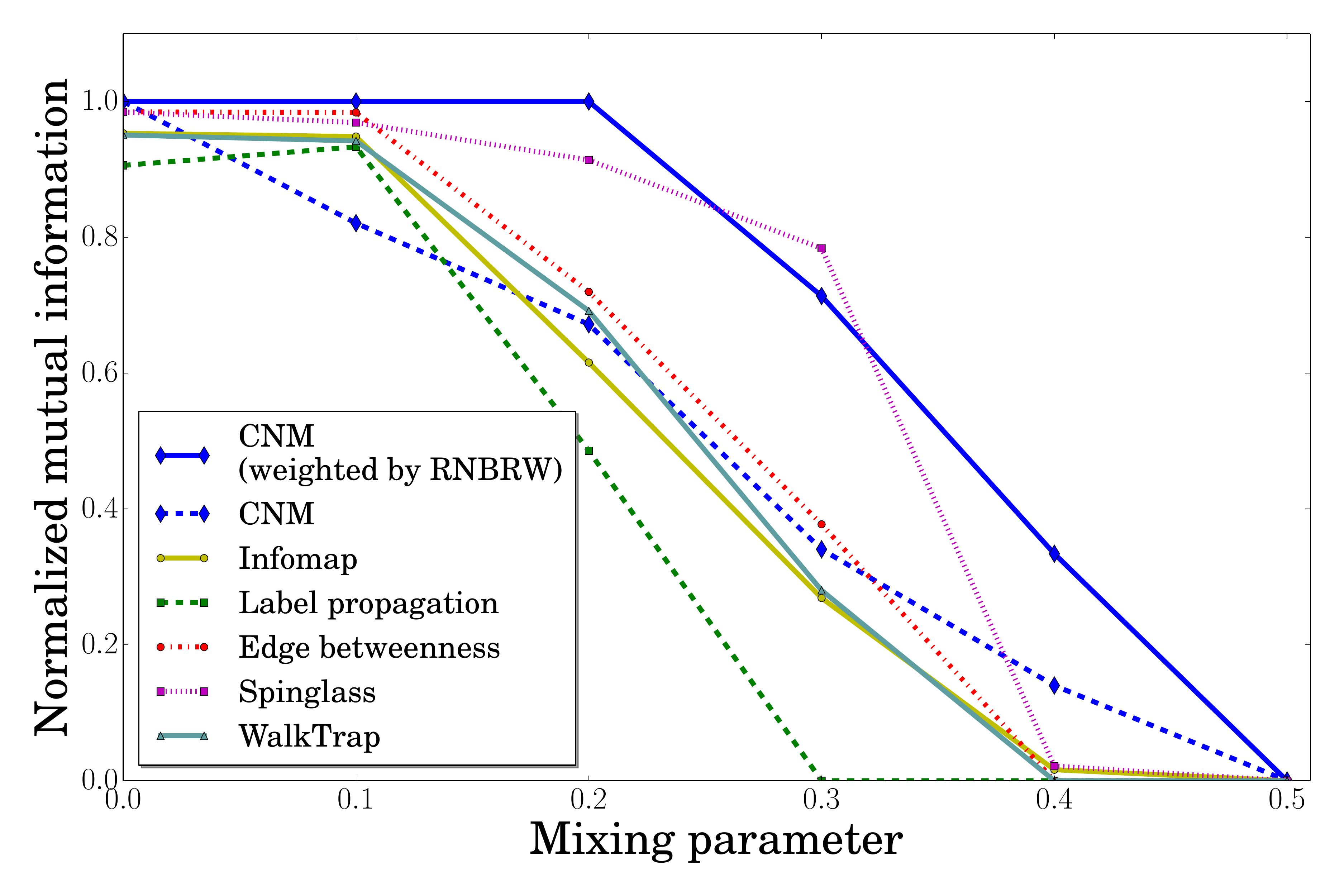}%
	\end{center}
	\caption{Comparing the performance of CNM equipped with RNBRW to Infomap, Label propagation, Edge betweenness, Spinglass, and Walktrap algorithms. 
	The LFR benchmark networks have  $500$ nodes with average degree $7$, and community sizes ranging from $30$ to $70$.
		}\label{fig:InfoMixLFRn}
\end{figure}

	



\end{document}